\newtheorem{remark}{Remark}
\newtheorem{example}{Example}
\newtheorem{proposition}{Proposition}
\newtheorem{conjecture}{Conjecture}
\DeclareMathOperator*{\argmax}{arg\,max}
\def\@oddfoot{\textbf\footnotesize \hfil\thepage}
\def\@evenfoot{\textbf\footnotesize \thepage \hfil}
\def\QEDclosed{\mbox{\rule[0pt]{1.3ex}{1.3ex}}} % for a filled box
\def\QED{\QEDclosed} % default to closed
\title{Data-Unit-Size Distribution Model with Retransmitted Packet Size Preservation Property
and Its Application to Goodput Analysis for Stop-and-Wait Protocol:
Case of Independent Packet Losses
}
\author{Takashi Ikegawa
	\thanks{T.~Ikegawa is with Waseda Research Institute for Science and Engineering,
	Waseda University and 
	Graduate School of Mathematical Sciences, the University of Tokyo, Japan
	(e-mail: ikegawa@aoni.waseda.jp or tikegawa@ms.u-tokyo.ac.jp).}
	}
\begin{document}

\maketitle

\begin{abstract}

This paper proposes a data-unit-size distribution model to represent
the retransmitted packet size preservation (RPSP) property
in a scenario where independently lost packets are retransmitted by a stop-and-wait protocol.
RPSP means that retransmitted packets with the same sequence number
are equal in size to the packet of the original transmission,
which is identical to the packet generated from a message
through the segmentation function,
namely, generated packet.
Furthermore, we derive goodput formula 
using an approach to derive the data-unit-size distribution.
We investigate the effect of RPSP on frame size distributions and goodput
in a simple case when no collision happens
over the bit-error prone wireless network equipped with IEEE 802.11 Distributed Coordination Function,
which is a typical example of the stop-and-wait protocol.
Numerical results show that the effect gets stronger
as bit error rate increases and
the maximum size of the generated packets is larger than the mean size
for large enough packet retry limits
because longer packets will be repeatedly corrupted and retransmitted more times
as a result of RPSP.

\end{abstract}

\begin{keywords}
Data unit size, retransmitted packet size preservation property, 
message segmentation, goodput, independent packet loss, IEEE 802.11 Distributed Coordination Function.
\end{keywords}

\section{Introduction}

Transfers of data units over communication networks suffer frequently from failure
due to various reasons including bit errors, congestion and collision.
To provide an error-free transmission service of messages,
i.e., data units generated by reliable applications,
a sender requires to implement one or more 
communication protocols
that include error recovery function.
The error recovery function allows the sender to retransmit lost packets.
For example,
distributed coordination function (DCF) for IEEE 802.11 wireless local area networks
specifies a stop-and-wait protocol (SWP) 
to realize the error-recovery function 
in a simple manner
\cite{IEEE07}.

The packets, i.e., SWP-layer data units,
that have been corrupted or lost within the networks will be transmitted
by the error-recovery function.
In general,
such retransmitted packets with the same sequence number ({\tt seqNum}) are equal in size
to the packet in the original transmission.
We call this property retransmitted packet size preservation: RPSP.

The packet retransmission probability will depend on the size of frames,
which are data units that contain the packet and are transferred over physical links.
Typical situations include the case when frames
are lost due to bit errors
because the frame corruption probability is approximately proportional
to the frame size.

In papers \cite{IKE04_RPSP_WiOpt, IKE05_RPSP_WiOpt, IKE05_RPSP_MSWiM},
the effect of RPSP on the mean frame size was discussed
for bit error prone networks.
These papers showed that the mean frame size with RPSP is larger 
than that without RPSP
as bit error rate increase if the packet size distribution has dispersion.
The reason for this is that longer frames will be repeated corrupted more time due to RPSP.

The frame sizes affect several quality of service (QoS) parameters 
(e.g., goodput) for applications.
Consequently, the effect of RPSP on QoS parameters will appear in some cases.
However, in previous work on QoS parameter analysis
over links with bit errors,
such as studies for IEEE 802.11 DCF goodput analysis 
including \cite{BIA00, CHA03, CHE11}, 
the effect of RPSP was ignored.
For example, frame sizes are assumed to be constant
although actual frames size distribution has dispersion (e.g.\cite{MOL00, NA06}). 
The purpose of this paper is to propose a data-unit-size distribution model with RPSP
to represent among the sizes of respective data units 
(i.e., messages, generated packets, transferred packets and frames)
and to derive the goodput formula 
using an approach to derive the data-unit-size distribution.

The rest of the paper is organized as follows.
In the next section, we describe the communication network model
underlying our study.
Section~\ref{sec: size distributions} derives the forms of size distributions
of generated packets, transferred packets and frames.
Section~\ref{sec: goodput}
derives the form of goodput
and applies the result to an IEEE 802.11 DCF wireless network.
Section~\ref{sec: results} investigates the effect of RPSP
on the frame size distribution and goodput for actual message-size distributions
Finally, Section~\ref{sec: conclusion} summarizes this paper and mentions future work. 

\section{Communication network model}

	\begin{figure}
	\centering
	\epsfig{file=./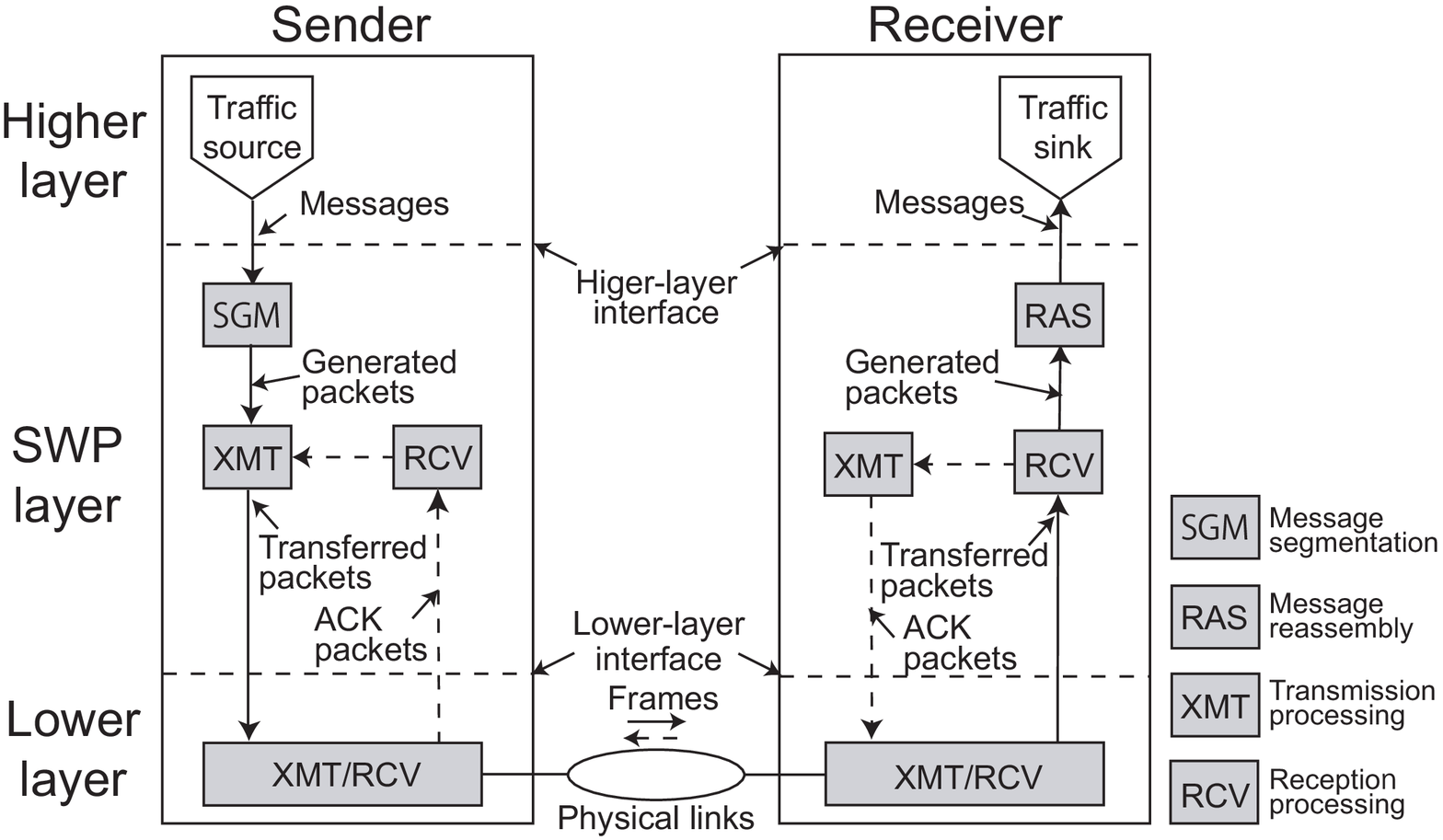, width=12cm} \\
	\caption{Communication network model.}
	\label{fig: network}
	\end{figure}

In this section,
we first explain the three-layered communication network model
under consideration.
Next,
the model of data units introduced in this paper
at the respective layer is described.
In final,
we explain some assumptions for analytical tractability.

\subsection{Layer model}

To characterize the nature of RPSP and message segmentation,
we consider a communication network
of which conceptual representation is shown in Fig.~\ref{fig: network}.
Each station (a sender and a receiver) has three layers.
The middle layer is referred to as an SWP layer.
It implements message segmentation-reassembly and error-recovery functions.
The error-recovery function is assumed to be implemented in a stop-and-wait scheme.
The layer above the SWP layer, namely the higher layer, 
contains a traffic source and sink.
The traffic source generates the data units.
On the other hand,
the traffic sink terminates the corresponding data units.
The layer below the SWP layer,
namely the lower layer,
contains an entity that can transfer data units over physical links at a sender.

\subsection{Data-unit model}

We define data units exchanged between peer entities at the respective layer as follows:

\begin{description}

\item{\bf Message:}
	a data unit generated by a traffic source
	with a given size distribution
	of which function is denoted by $F^{(m)}(\cdot)$.

\item{\bf Packet:}
	a data unit created from a message through segmentation function
	by adding a header and/or trailer, i.e., control information,
	to the (divided) message.
	We assume that size of SWP-layer's control information is constant
	and equal to $\ell_h^{({\rm R})}$.
	Whenever a packet is created,
	a {\tt seqNum} ($\ge 1$) is assigned.
	To model the RPSP explicitly,
	the packets are categorized into the following two kinds:

	\begin{description}

	\item{\bf Generated packet:}
		a packet that is generated from a message
		by a sender's SWP layer
		at the original transmission.
		The message segmentation function implemented in the sender's SWP layer
		enables a single message to be divided
		into several generated packets
		if the message size is larger than the
		payload size $\ell_d (>0)$.
		The receiver's SWP layer performs a message reassembly function,
		thus reassembling the segmented generated packets before delivering them
		to the higher layer.

	\item {\bf Transferred packet:}	
		a packet that is
		encapsulated into the frame.
		Due to RPSP,
		all the sizes of transferred packets
		with the same {\tt seqNum} 
		are equal to that of the generated packet.	
\end{description}

\item{\bf Frame:}
	a data unit that is made by encapsulating a transferred packet into a frame and
	by adding control information to the transferred packet,
	and will be transferred over physical links.
	The size of lower-layer's control information is assumed to be cosntant
	and equal to $\ell_h^{({\rm L})}$.

\end{description}

\subsection{Assumptions}

For analytical tractability,
we make the following assumptions.

\begin{description}

\item[\texttt{A1}:]
	Message sizes are mutually independent and 
	identically distributed 
	according to a common message-size distribution function $F^{(m)}(\cdot)$.
	The distribution $F^{(m)}(\cdot)$ has a finite mean value $\ell^{(m)}$,
	which is referred to as the mean message size.

\item[\texttt{A2}:]
	Frames are independently lost
	with probability
	\begin{align}
	g(x), \ 0 \le g(x) < 1, \notag
	\end{align}
	where $x$ is the size of information field in the frame,
	equivalently, the size of a transferred packet.

\item[\texttt{A3}:]
	The sender operates under a heavy traffic assumption,
	meaning that the sender's SWP layer always has a generated packet
	available to be sent.

\end{description}

\medskip

\begin{example}{\it Case of independent bit error prone links.}
Typical situations satisfying assumption \texttt{A2}
include the cases where frames are lost due to bit errors 
that occur independently.
Letting  $p_e$ be bi-error rate, $g(x)$ is given by
\begin{align}
\label{eq: bit}
		g(x) &= 1 - (1 - p_e)^{x+ \ell_h^{({\rm L})}},
\end{align}
where $x$ is the size of transferred packets.
\end{example}
%

%%%%%%%%%%%%%%%%%%%%%%%%%%%%%%%%%%%%%%%%%%%%%%%%%%%%%%%%

\section{Analysis of size distributions for generated packets,
	transferred packets and frames}
\label{sec: size distributions}

In this section,
we derive the forms of a size distributions of generated packets,
	transferred packets and frames under assumptions mentioned
	in the preceding section.

\subsection{Form of generated packet size distribution}

Let random variable $L^{(p)}$ be
	a size of generated packets.
Denoting $F^{(p)}(\cdot)$ be a generated packet size distribution,
that is $F^{(p)}(x) \stackrel{\triangle}{=}  \Pr.({L^{(p)} \le x})$,
from the argument \cite{IKE12_PerEva},
we have
\begin{align}
\label{eq: F p}
    F^{(p)}(x) 
	&= \left(1 - \pi^{(\textrm{E})}\right) \textbf{1}(x - \ell_d - \ell_h^{({\rm R})})
	+ \pi^{(\textrm{E})} F^{(\textrm{E})}(x),
\end{align}
where $\pi^{(\textrm{E})}$ is an occurrence probability of edge packets
and $F^{(\textrm{E})}(\cdot)$
is a distribution of edge-packet sizes.
The edge packet is defined as the final segmented generated-packet,
if a message is segmented.
It is identical with the original message if not segmented.

The forms of $\pi^{(\textrm{E})}$ and $F^{(\textrm{E})}(\cdot)$ are given by
\begin{align}
\label{eq: pi^E}
\pi^{(\textrm{E})} &= 
	\cfrac{1} 
	{
	\displaystyle\sum_{s=0}^\infty \, \int_{s \ell_d}^\infty \, dF^{(m)}(x)
	}
	=
	\cfrac{1} 
	{
	\displaystyle\sum_{s=0}^\infty \, \left(1 - F^{(m)}(s \ell_d)\right),
	}
\end{align}
and
\begin{align}
\label{eq: F^E}
F^{(\textrm{E})}(x) 
&= \begin{cases}
	0, 	& 0 \le x < \ell_h^{({\rm R})}, \\
	\displaystyle \sum_{s = 0}^{\infty}
	\left\{F^{(m)}(x + s \, \ell_d - \ell_h^{({\rm R})}) - F^{(m)}(s \, \ell_d)\right\},
		& \ell_h^{({\rm R})} \le x \le \ell_d + \ell_h^{({\rm R})}, \\
	1, 	& x > \ell_d + \ell_h^{({\rm R})}.
\end{cases}
\end{align}

\begin{example}{\it Case of discrete message-size distribution.} 
\label{example: Case of discrete message-size distributions}
Consider the case where the message-size distribution function $F^{(m)}(\cdot)$
is given by 
\begin{align}
\label{eq: F m_d}
F^{(m)}(x) &=\sum_{i=1}^{n_d} \omega^{(m)}_i \textbf{1}(x - \ell^{(m)}_i),
\end{align}
where $n_d \ge 1$, $w^{(m)}_i > 0$, $\ell^{(m)}_i > 0$
	for $i = 1, 2, \cdots, n_d$, and $\sum_{i=1}^{n_d} w^{(m)}_i = 1$.

The form of $\pi^{(\textrm{E})}$ is given by $\{\sum_{i=1}^{n_d} w^{(m)}_i k_i\}^{-1}$ 
with $k_i = \lceil \ell^{(m)}_i / \ell_d \rceil$.
This can be intuitively shown from the fact
that 1) $k_i$ generated packets are created from one message of size $\ell^{(m)}_i$, and 
2) they consist of $k_i - 1$ generated packets of size $\ell_d$
(called body packets \cite{IKE12_PerEva}) and one edge packet.
The generated-packet-size distribution can be written as
\begin{align}
\label{eq: F p d}
	F^{(p)}(x) &= 
	\left(1 - \pi^{(\textrm{E})}\right) \, 
	\textbf{1}(x - \ell_d - \ell_h^{({\rm R})})
	+ \pi^{(\textrm{E})} \, \sum_{i=1}^{n_d} \,
	w^{(m)}_i \, 
	\textbf{1}(x - \ell^{(m)}_i + (k_i - 1) \, \ell_d - \ell_h^{({\rm R})}).
\end{align}
The form of \eqref{eq: F p d} can be rewritten as
\begin{align}
\label{eq: F p d another 1}
	F^{(p)}(x) &\stackrel{\triangle}{=}  
	\sum_{i=0}^{n_d} w^{(p)}_i \textbf{1}(x - \ell^{(p)}_i),
\end{align}
where
\begin{align}
\label{eq: F p d another 2}
&\begin{cases}
	w^{(p)}_0 = 1 - \pi^{(\textrm{E})},\\
	l^{(p)}_0 = \ell_d + \ell_h^{({\rm R})},
\end{cases} \\
&\begin{cases}
	w^{(p)}_i = \pi^{(\textrm{E})} \, w^{(m)}_i,\\
\label{eq: F p d another 3}
	l^{(p)}_i = \ell^{(m)}_i - (k_i - 1) \, \ell_d + \ell_h^{({\rm R})},
\end{cases}i=1, 2, \cdots, n_d.
\end{align}
\hspace*{\fill}~\QED
\end{example}

Letting $\ell^{(p)}$ be the mean packet size,
we have
\begin{align}
\label{eq: ell p}
\ell^{(p)} &\stackrel{\triangle}{=} \int_{0}^\infty \,x \,dF^{(p)}(x) 
= \pi^{(\textrm{E})} \, \ell^{(m)} + \ell_h^{({\rm R})}.
\end{align}

\subsection{Form of transferred packet size distribution}

Let $F^{(q)}(\cdot)$ be a transferred packet size distribution.
Denoting the number of retransmissions of the transferred packet
with the same {\tt seqNum}
of the generated packet
which size is equal to $L^{(p)}$
by $R$,
we can prove the following proposition.

\begin{proposition}
\label{pro: F q}
The transferred packet size distribution $F^{(q)}(\cdot)$
is given by
\begin{align}
\label{eq: F q}
F^{(q)}(y)
	&= \begin{cases}
	0, 	&	0 \le y < \ell_h^{({\rm R})}, \\
	\cfrac{
		\displaystyle \int_{x=\ell_h^{({\rm R})}}^{x=y} 
		E\left[R + 1 \, | \,  L^{(p)} = x \right] \, dF^{(p)}(x)
		}
		{
		E\left[R + 1 \right]
		}, 	& 
			\ell_h^{({\rm R})} \le y \le \ell_d + \ell_h^{({\rm R})}, \\
		1,	&	
			y > \ell_d + \ell_h^{({\rm R})}.
	\end{cases}
\end{align}
\end{proposition}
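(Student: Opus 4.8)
The plan is to identify $F^{(q)}$ with the long-run empirical distribution of transferred-packet sizes produced under the heavy-traffic assumption \texttt{A3}, and then to recognize the resulting transmission-count-weighted (size-biased) law as the stated integral form.

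First I would consider a run of $N$ generated packets with i.i.d.\ sizes $X_1, \ldots, X_N$ drawn from $F^{(p)}$, and denote by $R_i$ the number of retransmissions of the packet carrying the $i$-th sequence number. By the RPSP property, a generated packet of size $X_i$ is transmitted exactly $R_i+1$ times --- one original transmission plus $R_i$ retransmissions --- and every one of these transferred packets has the common size $X_i$. Consequently the total number of transferred packets produced in the run is $T_N = \sum_{i=1}^{N}(R_i+1)$, while the number of them with size at most $y$ is $S_N(y) = \sum_{i=1}^{N}(R_i+1)\,\textbf{1}(X_i \le y)$.

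Next, reading $F^{(q)}(y)$ as the probability that a transferred packet chosen uniformly at random from the run has size $\le y$, I would write $F^{(q)}(y) = \lim_{N\to\infty} S_N(y)/T_N$ and apply the strong law of large numbers to the numerator and denominator separately, obtaining
\begin{align}
F^{(q)}(y) = \frac{E\left[(R+1)\,\textbf{1}(L^{(p)}\le y)\right]}{E[R+1]}. \notag
\end{align}
Conditioning on $L^{(p)}$ through the tower property turns the numerator into $\int_{0}^{y} E[R+1 \mid L^{(p)}=x]\,dF^{(p)}(x)$, which is exactly the weighted integral appearing in \eqref{eq: F q}.

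Finally I would settle the three cases from the support of $F^{(p)}$. Since $F^{(p)}$ places no mass below $\ell_h^{({\rm R})}$, the integrand vanishes on $[0,\ell_h^{({\rm R})})$, so $F^{(q)}(y)=0$ there and the lower limit may be raised to $\ell_h^{({\rm R})}$; since the largest generated packet has size $\ell_d+\ell_h^{({\rm R})}$, integrating to any $y>\ell_d+\ell_h^{({\rm R})}$ recovers the full denominator $E[R+1]$ and yields $F^{(q)}(y)=1$; the middle range reproduces the stated expression. I expect the main obstacle to be conceptual rather than computational: one must justify that the operationally correct definition of the transferred-packet law is this count-weighted distribution rather than $F^{(p)}$ itself, the point being that under \texttt{A2} longer packets fail and retransmit more frequently, and are therefore over-represented among transferred packets in exact proportion to $E[R+1\mid L^{(p)}=x]$.
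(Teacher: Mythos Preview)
Your argument is correct and coincides with the paper's own proof: both identify $F^{(q)}$ as the long-run empirical law of transferred-packet sizes, observe via RPSP that each generated packet of size $L^{(p)}_\kappa$ contributes $R_\kappa+1$ copies to the transferred-packet stream, and then apply the law of large numbers to the resulting ratio. The only cosmetic difference is that the paper first specializes to a discrete $F^{(p)}$ and decomposes the ratio into three time-indexed factors before invoking the LLN, whereas you index by the number $N$ of generated packets and handle general $F^{(p)}$ directly; your route is slightly more streamlined but the idea is the same.
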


\begin{proof}
See Appendix~\ref{ap: F q}.
\end{proof}

\medskip

From assumption \texttt{A2},
the form of $E[R + 1 \, | \,  L^{(p)} =x]$
for $\ell_h^{({\rm R})} \le x \le \ell_d + \ell_h^{({\rm R})}$ is given by
\begin{align}
	E\left[R + 1 \, | \,  L^{(p)} =x\right] &= 
	 \left(1-g(x)\right) \, \sum_{r=0}^{n_\textrm{RL}} \, (r+1) \, \left\{g(x)\right\}^r  
	+ \left\{g(x)\right\}^{n_\textrm{RL}+1} \, \left(n_\textrm{RL} + 1\right)  \notag \\
	&= \cfrac{1-\{g(x)\}^{n_\textrm{RL}+1}}{1 - g(x)}
	\stackrel{\triangle}{=} h(x, n_\textrm{RL}).
\end{align}
where $n_{\rm RL}(\ge 0)$ is the maximum number of retransmission attempts of the transferred packet
	with the same {\tt seqNum},
	referred to as retry limit.

\medskip

\begin{remark}
Substitution of \eqref{eq: F p} into \eqref{eq: F q} yields $F^{(q)}(y)$
for $\ell_h^{({\rm R})} \le y \le \ell_d + \ell_h^{({\rm R})}$ given by
\begin{align}
	F^{(q)}(y) &=
		\cfrac{
		\left(1 - \pi^{(\textrm{E})}\right) 
		\, h(\ell_d + \ell_h^{({\rm R})}, n_\textrm{RL}) \, \textbf{1}(x - \ell_d - \ell_h^{({\rm R})})
		+ \pi^{(\textrm{E})} \, \displaystyle\int_{x=\ell_h^{({\rm R})}}^{x=y} 
			\, h(x, n_\textrm{RL}) \, dF^{(\textrm{E})}(x)
		}
		{
		E\left[R + 1 \right]
		},	
\end{align}
where $E[R + 1]$ is given by
\begin{align}
	E\left[R + 1 \right] &=
	\left(1 - \pi^{(\textrm{E})}\right) \, h(\ell_d + \ell_h^{({\rm R})}, n_\textrm{RL})
	+
	\pi^{(\textrm{E})} \,
	\int_{\ell_h^{({\rm R})}}^{\ell_d + \ell_h^{({\rm R})}} 
			\, h(x, n_\textrm{RL})
			\, dF^{(\textrm{E})}(x).
\end{align}
\hspace*{\fill}~\QED 
\end{remark}

\medskip

\begin{example} 
\label{ex: no frame loss}
\textit{RPSP effect when no frame is lost.}
Consider the case where no frame is lost.
In this case, the number of retransmissions is equal to zero, i.e., $R = 0$.
From \eqref{eq: F q}, $F^{(q)}(x)$ is identified with $F^{(p)}(x)$,
implying that no effect of RPSP appears. 
\hspace*{\fill}~\QED
\end{example}

\medskip

\begin{example} 
\label{ex: common generated packet size}
\textit{RPSP effect when generated packets are constant in size.}
Let us consider the case where generated packets have a common size
$\ell_c (= \ell^{(p)})$,
that is
\begin{align}
\label{eq: F p c}
	F^{(p)}(x) &= \textbf{1}(x - \ell_c).
\end{align}
Thypical situations include 
when message sizes follow the discrete distribution function
given by \eqref{eq: F m_d} with $n_d = 1$ and $\ell^{(m)}_1 (= \ell_c - \ell_h^{({\rm R})}) \le \ell_d$,
resulting in $\pi^{(\textrm{E})} = 0$.
Note that $F^{(p)}(x)$ can be approximated by $\textbf{1}(x - \ell_d)$
if $\ell^{(m)}$ is large enough compared with $\ell_d$
from \cite[Remark 3]{IKE12_PerEva}.

With \eqref{eq: F q} and \eqref{eq: F p c},
$F^{(q)}(x)$ is identified with $F^{(p)}(x) = \textbf{1}(x - \ell_c)$,
which indicates that no effect of RPSP appears. 
\hspace*{\fill}~\QED
\end{example}

\subsection{Form of frame size distribution}

Denote the frame size distribution by $F^{(f)}(\cdot)$.
Since a frame contains a transferred packet and
the size of control information added the transferred packet
is $\ell_h^{({\rm L})}$,
$F^{(f)}(x)$ is simply given by $F^{(q)}(x - \ell_h^{({\rm L})})$.

%%%%%%%%%%%%%%%%%%%%%%%%%%%%%%%%%%%%%%%%%%%%

\section{Goodput Analysis}
\label{sec: goodput}

In this section,
first,
we derive the form of goodput in a simple scenario.
Next, we apply the result to an IEEE 802.11 DCF wireless network.

\subsection{Form of goodput}

Let $G$ be goodput of a single SWP connection,
which is defined as the mean number of bits 
by a receiver's higher layer entity
across the higher layer interface
per unit time.
We denote  the interdeparture time of the transferred packet
by $T^{(cycle)}$.
In addition,
we denote the event meaning
that the transferred packet is successfully transmitted
by ``{\it delivery}''.
Then we can prove the following proposition.

\begin{proposition}
\label{pro: G}

The form of goodput $G$ is given by
\begin{align}
\label{eq: G}
	G &= \cfrac{
		\displaystyle\int_{x=\ell_h^{({\rm R})}}^{\ell_d + \ell_h^{({\rm R})}} \,  
			\Pr.(\text{\it delivery} \, | \, L^{(p)} = x) \, (x - \ell_h^{({\rm R})}) \, dF^{(p)}(x)
		}
		{
		\displaystyle\int_{x=\ell_h^{({\rm R})}}^{\ell_d + \ell_h^{({\rm R})}} \, 
			E\left[R + 1 \, | \,  L^{(p)} = x \right]
			\,E\left[T^{(cycle)} \, | \,  L^{(p)} =x \right] \, dF^{(p)}(x)
		}.
\end{align}

\end{proposition}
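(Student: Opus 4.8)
The plan is to prove \eqref{eq: G} by a renewal--reward argument. Under the heavy--traffic assumption \texttt{A3} the sender's SWP layer never idles: it serves generated packets one at a time, each carrying a distinct {\tt seqNum}, making up to $n_\textrm{RL}+1$ transmission attempts and stopping as soon as a \emph{delivery} occurs or the retry limit is reached, whereupon it turns to the next generated packet. The natural regeneration epochs are the message boundaries, which are i.i.d.\ by \texttt{A1}; within one message block the generated--packet sizes are fixed by segmentation, and by \texttt{A2} the frame losses of distinct transferred packets are mutually independent. Hence, by the renewal--reward theorem (regenerating at message boundaries), the time--average goodput equals the ratio of the expected delivered payload to the expected service time, per generated packet,
\begin{align}
	G = \frac{E[\,\text{delivered payload per generated packet}\,]}{E[\,\text{service time per generated packet}\,]}, \notag
\end{align}
where each expectation is taken over the stationary generated--packet size distribution $F^{(p)}(\cdot)$ together with the independent losses; recall that $F^{(p)}(\cdot)$ was built through $\pi^{(\textrm{E})}$ precisely so as to be the long--run fraction of generated packets of each size, and that it is supported on $[\ell_h^{({\rm R})},\,\ell_d+\ell_h^{({\rm R})}]$ by \eqref{eq: F p}.

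For the numerator I would condition on the generated--packet size $L^{(p)}=x$. After its constant SWP header is stripped, such a packet offers a payload of $x-\ell_h^{({\rm R})}$ bits, and these bits cross the higher--layer interface exactly when the \emph{delivery} event occurs; if instead the retry limit is exhausted without success, the packet is discarded and contributes nothing. The conditional expected reward is therefore $(x-\ell_h^{({\rm R})})\,\Pr.(\text{\it delivery}\,|\,L^{(p)}=x)$, and integrating against $dF^{(p)}(x)$ over the support reproduces the numerator of \eqref{eq: G}.

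For the denominator, conditioning again on $L^{(p)}=x$, the service of the packet occupies $R+1$ successive transferred--packet transmissions, all of the common size $x$ by RPSP, so its duration is the random sum $\sum_{r=1}^{R+1} T^{(cycle)}_r$ of the per--attempt interdeparture times. The crux of the proof, and the step I expect to be the main obstacle, is to replace the conditional expectation of this random sum, $E[\sum_{r=1}^{R+1} T^{(cycle)}_r\,|\,L^{(p)}=x]$, by the product $E[R+1\,|\,L^{(p)}=x]\,E[T^{(cycle)}\,|\,L^{(p)}=x]$ appearing in \eqref{eq: G}. I would settle this with Wald's identity: by \texttt{A2} the loss outcomes of distinct attempts are independent given $x$, so the $T^{(cycle)}_r$ are independent and identically distributed across attempts, with common conditional mean $E[T^{(cycle)}\,|\,L^{(p)}=x]$, while the attempt count equals $\min\{\,\text{index of the first success},\,n_\textrm{RL}+1\,\}$, a stopping time whose event $\{R+1\ge r\}$ is determined by the outcomes of attempts $1,\dots,r-1$ alone and is thus independent of $T^{(cycle)}_r$. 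Writing the sum as $\sum_{r\ge 1} T^{(cycle)}_r\,\textbf{1}(R+1\ge r)$ and taking conditional expectations termwise then yields the desired factorization, with $E[R+1\,|\,L^{(p)}=x]=h(x,n_\textrm{RL})$ in closed form. Integrating against $dF^{(p)}(x)$ produces the denominator, and combining the two expressions establishes \eqref{eq: G}.
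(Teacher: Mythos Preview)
Your proposal is correct and reaches \eqref{eq: G} by a route that is genuinely different from the paper's. The paper works at the sample-path level: it restricts to the discrete case \eqref{eq: F p d another 1}, writes $G$ as the limit of delivered bits over elapsed time, approximates elapsed time by the triple sum $\sum_{i}\sum_{\kappa}\sum_{j}T_{i,j,\kappa}$ under \texttt{A3}, and then factors the resulting ratio into a chain of elementary limits---each identified via the Law of Large Numbers with one of $w^{(p)}_i$, $E[R+1]$, $E[R+1\,|\,L^{(p)}=\ell^{(p)}_i]$, $\Pr.(\text{\it delivery}\,|\,L^{(p)}=\ell^{(p)}_i)$ and $E[T^{(cycle)}\,|\,L^{(p)}=\ell^{(p)}_i]$---before reassembling them into \eqref{eq: G d}. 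Your argument instead names the structure up front: regenerate at message boundaries (renewal--reward), pass from per-message to per-generated-packet expectations via the construction of $F^{(p)}(\cdot)$, and obtain the product in the denominator by a Wald-type stopping-time identity. What this buys you is brevity and conceptual clarity; what the paper's approach buys is that it never has to invoke renewal theory or Wald explicitly, only the Law of Large Numbers, at the price of several pages of ratio manipulations. Note that the i.i.d.\ assumption on the per-attempt times $T^{(cycle)}_r$ that you need for Wald is exactly the assumption the paper also makes in its proof when it asserts that $\{T_{i,j,\kappa}\}$ is an i.i.d.\ sequence, so the two arguments rest on the same hypotheses.
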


\begin{proof}
See Appendix~\ref{ap: goodput}.
\end{proof}

\medskip

Note that assumption {\tt A2} yields the form of
$\Pr.(\text{\it delivery} \, | \, L^{(p)} = x)$ given by
\begin{align}
	\Pr.(\text{\it delivery} \, | \, L^{(p)} = x)
	&= 1 - \{g(x)\}^{n_\textrm{RL}+1}.
\end{align}

\subsection{Application of goodput analysis to IEEE 802.11 DCF}

We consider a simple scenario where
just one sender and one receiver exist
in a wireless network equipped with IEEE 802.11 DCF,
which is an SWP protocol.
Since no collision occurs,
from the argument described in \cite{CHA03},
the form of $E[T^{(cycle)} \, | \, L^{(p)} = x]$ in \eqref{eq: G}
can be simply written as
\begin{align}
\label{eq: E T cycle L}
    E\left[T^{(cycle)} \, | \, L^{(p)} = x \right] &= 
	\cfrac{
	(1-g(x)) \, \sigma
	}
	{
	1-\left\{g(x)\right\}^{n_\text{RL}+1}
	}
	\displaystyle\sum_{r=0}^{n_\text{RL}}
	b_r \, \left\{g(x)\right\}^r \notag \\
	&\quad+ \left(1 - g(x)\right) \, t_{\textrm{suc}}(x) 
	+ g(x) \, t_{\textrm{bit}}(x),
\end{align}
where 
\begin{description}

\item[$\sigma$: ]\rule{0in}{0in}
	DCF backoff slot size

\item[$b_r$:]\rule{0in}{0in}
	mean value of the backoff counter of the $r$th backoff stage,
	i.e., the $r$th retransmission attempt of the transferred packet

\item[$t_\textrm{suc}(x)$ and $t_\textrm{bit}(x)$: ]\rule{0in}{0in}

	mean interdeparture times of the transferred packet
	of size of $x$
	when a transmission is successful and
	fails due to bit errors, respectively.

\end{description}

The value of $E[b_r]$ is equal to $\textrm{CW}_r / 2$
because the backoff time at each transmission is uniformly chosen
in the range $[0, \textrm{CW}_r]$
where $\textrm{CW}_r$ is 
$\min\{2^r \, (\textrm{CW}_{\min} + 1) - 1, \textrm{CW}_{\max}\}$
for $r = 1, 2, \cdots, n_\textrm{RL}$ and
$\textrm{CW}_0$ is $\textrm{CW}_{\min}$.
Assuming that propagation delay is negligible,
we have 
\begin{align}
	t_\textrm{suc}(x) &=
	\cfrac{x + \ell_{\rm ACK}}{\mu_d}
	+ \cfrac{2 \, \ell_h^{({\rm L})}}{\mu_b}
	+ t_{\rm SIFS}
	+ t_{\rm DIFS},\\
    	t_\textrm{bit}(x) &= \cfrac{x}{\mu_d}
	+ \cfrac{\ell_h^{({\rm L})}}{\mu_b}
	+ t_{\rm EIFS},
\end{align}
where
$\mu_d$ is data-transmission rate,
$\mu_b$ is basic-link rate,
and 
$\ell_{\rm ACK}$ is ACK-packet size.
Here,
$t_{\rm SIFS}$,  $t_{\rm DIFS}$ and $t_{\rm EIFS}$ are
Short Inter Frame Space (IFS), 
DCF IFS
and
Extended IFS, respectively.
The derivation of \eqref{eq: E T cycle L} can be found in
Appendix~\ref{ap: E T cycle}.

\section{Numerical results and discussions}
\label{sec: results}

In this section,
we examine the effect of RPSP on frame-size distributions and goodput
by utilizing the results in Sections~\ref{sec: size distributions} and \ref{sec: goodput}.
We consider a scenario in which 
Web objects are transferred over the IEEE~802.11 DCF network
where bit errors occur independently.
In the following, 
we used the parameter values listed in Table~\ref{ta: parameter}.

\begin{table}[htbp]
%\begin{minipage}{10cm}
\begin{center}
\caption{Parameter values used to obtain numerical results}
\label{ta: parameter}
\renewcommand\thefootnote{\alph{footnote}}
\begin{tabular}{|l||l|} \hline
Parameter			& Value	\\\hline\hline
Basic-link rate  $\mu_b$ 	& $1$~Mbps \\
Data-transmission rate  $\mu_d$ & $11$~Mbps \\
SWP layer information field size $\ell_h^{({\textrm{R}})}$
				& $34$~bytes \\
Lower layer information field size $\ell_h^{({\textrm{L}})}$	& $24$~bytes	\\
Slot time $\sigma$ 		& $20~\mu$sec \\ 
Short IFS $t_{\rm SIFS}$ 	& $10~\mu$sec \\
DCF IFS $t_{\rm DIFS}$ 		& $50~\mu$sec \\
%
%ACK timeout value $t_\textrm{TO}$ & $233~\mu$sec\footnotemark[1] \\
Extended IFS $t_{\rm EIFS}$ &  $263~\mu$sec \\
ACK-packet size $\ell_{\rm ACK}$ & $14$~bytes \\ 
Minimum contention window size $\textrm{CW}_{\min}$ & $31$\\
Maximum contention window size $\textrm{CW}_{\max}$ & $1023$ \\
\hline
\end{tabular}
\end{center}
%
%\footnotetext[1]{\scriptsize
%		The value of $t_\textrm{TO}$ is set to $t_{\textrm{SIFS}}
%		+ \ell_{\rm ACK}/\mu_d + \ell_h^{\rm PHY}/\mu_b + \sigma$.}
%
%\footnotetext[2]{\scriptsize
%	The value of $t_{\rm EIFS}$ is set to $t_\textrm{TO}+t_{\textrm{DIFS}}-\sigma$.}
%\end{minipage}
\end{table}

Two kinds of Web pages are considered: static and dynamic Web pages.
We shall use the following Web object size distributions
from traffic measurements \cite{MOL00,SHI03}.
\begin{itemize}
\item{{\bf Static Web objects:}} 
	The sizes of the static Web objects is assumed to follow a lognormal distribution given by
	\begin{align}
	\label{eq: F^{(m)}^l(x)}
	F^{(m)}(x) &= 
		\begin{cases}
		\displaystyle \int_{y=0}^{y=x} \cfrac{1}{\sqrt{2 \pi} \sigma y}
		e^{\tfrac{-\left(\log y - \mu\right)^2}
		{2 \sigma^2}} dy, & x > 0, \\
		0, & x \le 0.
		\end{cases}
	\end{align}

The distribution parameters $\mu$ and $\sigma$ are assumed to be $6.34$ and $2.07$, respectively,
on the basis of the measured mean message size $\ell^{(m)}=4827$ bytes and
the measured standard deviation $\sigma^{(m)}=41,008$~bytes.
Note that this lognormal distribution can represent a long-tailed property.

\item{\textbf{Dynamic Web objects:}}
	The sizes of the dynamic Web objects are assumed to follow a Weibull distribution:
	\begin{align}
	F^{(m)}(x) &= 
		 \begin{cases}
		1 - e^{-(\lambda \, x)^\nu}, & x > 0, \\
		0, & \text{$x \le 0$}.
		\end{cases}	
	\end{align}
	The scale parameter $\lambda$ and the shape parameter $\nu$ are assumed to be 
	$4.02 \times 10^{-4}$ and $1.9$, respectively,
	which fit the measured dynamic Web object size distribution for one case of an entertainment site \cite{SHI03}.  
	Note that the Weibull distribution in this case is not a long-tailed distribution
	because the shape parameter $\nu$ is not smaller than $1$.
	The mean message size $\ell^{(m)}$ is $2207.37$~bytes,
	and the standard deviation $\sigma^{(m)}$ is $1208.43$~bytes.
\end{itemize}

\subsection{Effect of RPSP on frame size distribution}

\begin{figure}
\begin{minipage}[t]{8cm}
\centering
	\epsfig{file=./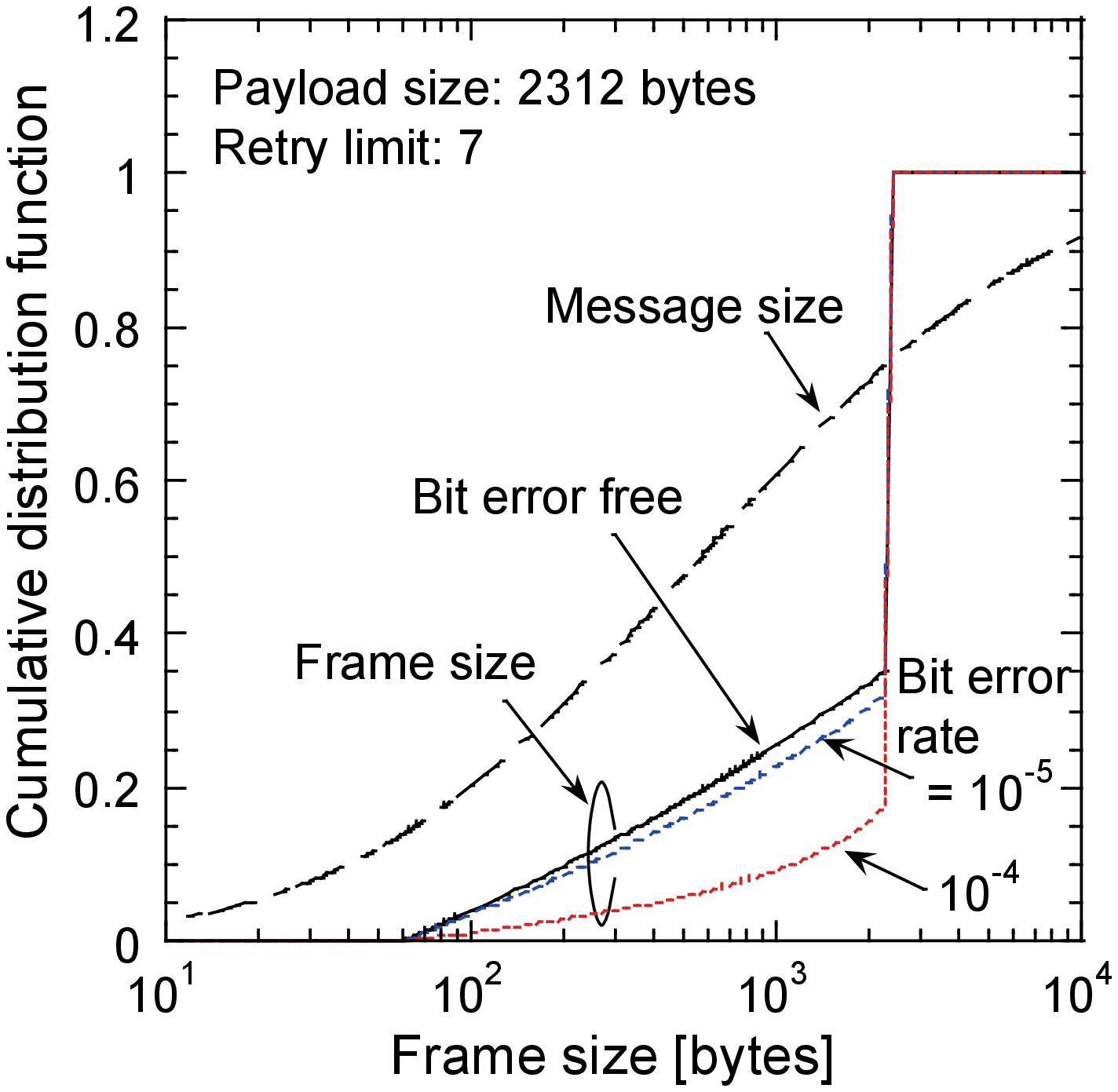, width=6cm} \\
	\footnotesize{a) Case of static Web objects.} 
\end{minipage}
\hfill
\begin{minipage}[t]{8cm}
\centering
	\epsfig{file=./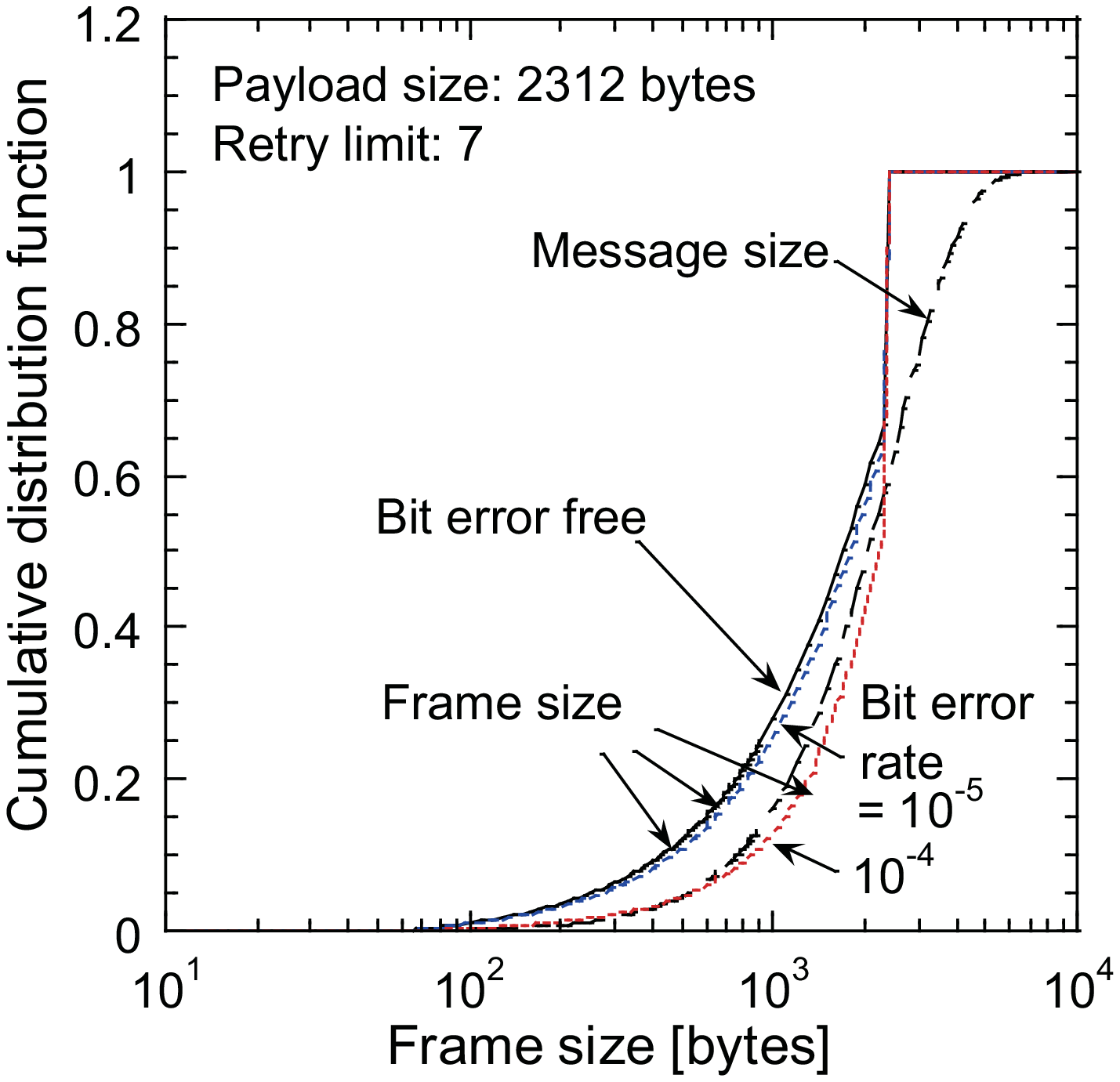, width=6cm} \\
	\footnotesize{b) Case of dynamic Web objects.}
\end{minipage}\\
\caption{Cumulative frame size distributions $F^{(f)}(\cdot)$
	for different bit error rates $p_e$.}
\label{fig: frame size}
\end{figure}

Figures~\ref{fig: frame size} (a) and (b) show the distributions of frame sizes
$F^{(f)}(\cdot)$ for different bit error rates $p_e$
of static and dynamic Web objects, respectively.
We used payload size $\ell_d$ of $2312$~bytes and
retry limit $n_\textrm{RL}$ of $7$.
Note that  $2312$~bytes of the payload size $\ell_d$ is the 
maximum transmission unit size of IEEE 802.11 wireless LANs
and
$7$ of retry limit $n_\textrm{RL}$ is the default value \cite{IEEE07}.

These figures show that
the frame size distribution $F^{(f)}(\cdot)$
for high bit error rates
is significantly different
from that for bit error free.
Thus,
we can see that the effect of RPSP produces a more concave curve
for the transferred packet size distribution
when the bit error rate is higher.

\begin{figure}
\begin{minipage}[t]{8cm}
\centering
	\epsfig{file=./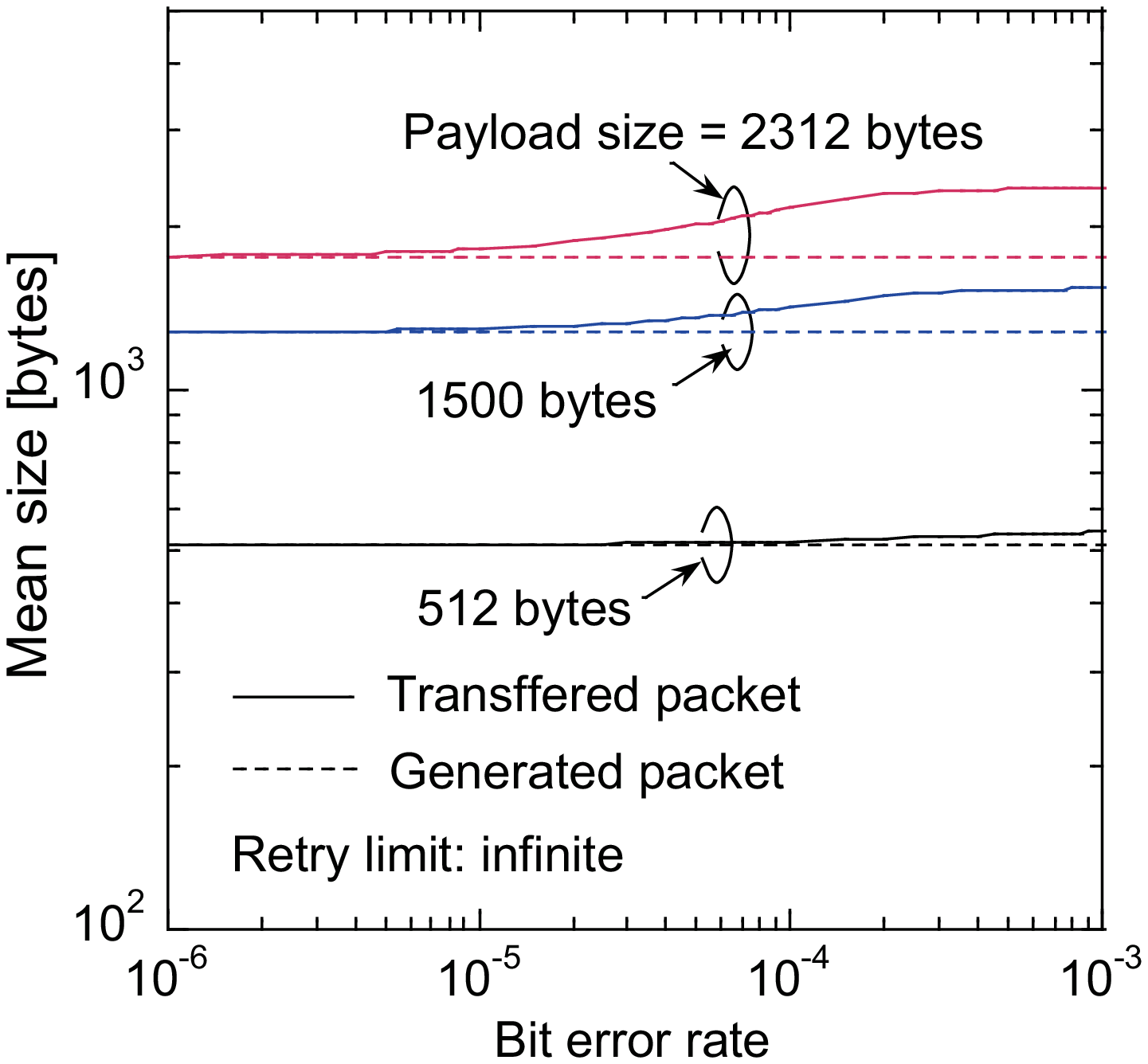, width=6cm} \\
	\footnotesize{a) Case of static Web objects.} 
\end{minipage}
\hfill
\begin{minipage}[t]{8cm}
\centering
	\epsfig{file=./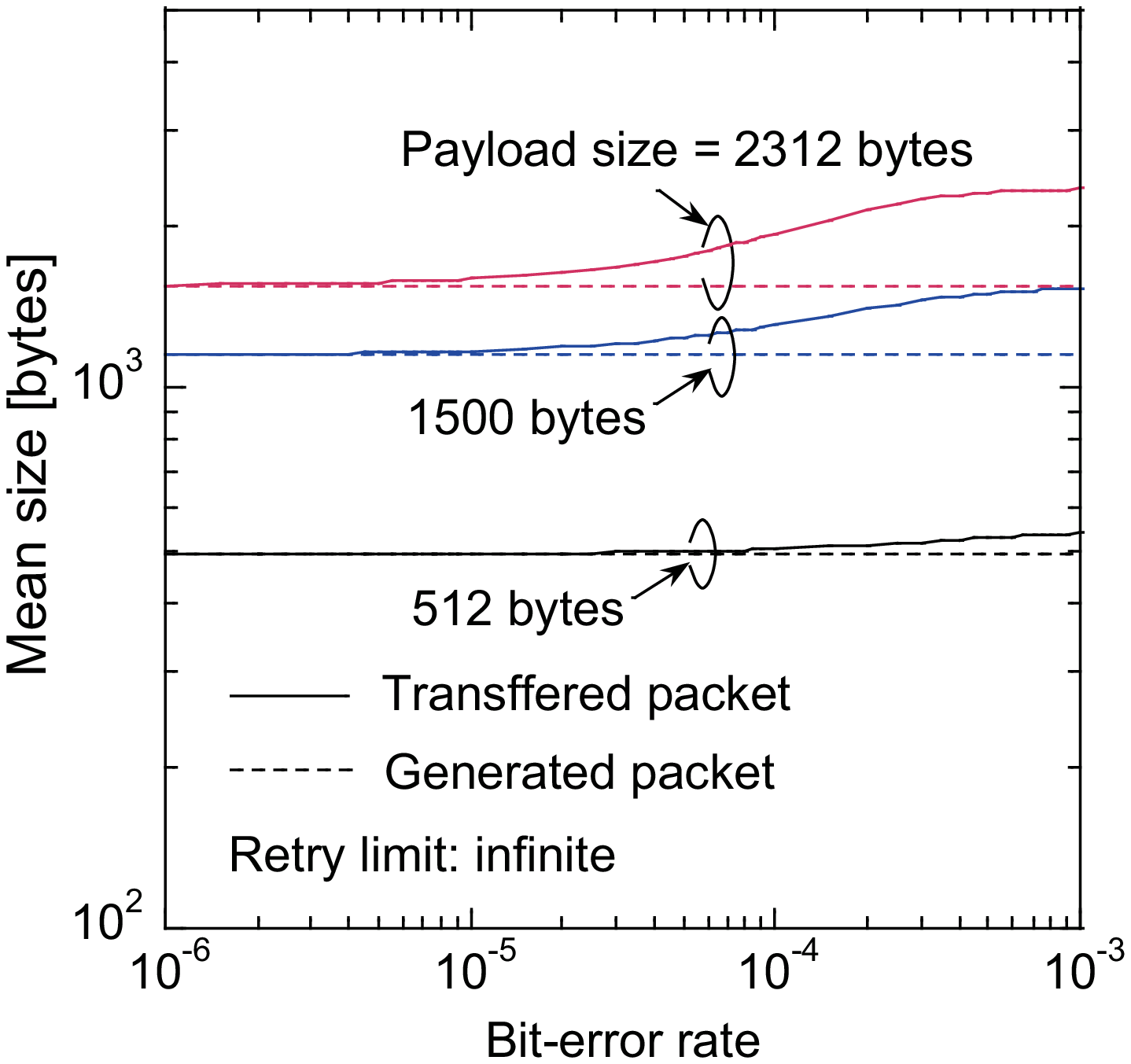, width=6cm} \\
	\footnotesize{b) Case of dynamic Web objects.}
\end{minipage}\\
\caption{Mean transferred packet size $\ell^{(q)}$ and 
mean generated packet size $\ell^{(p)}$ versus bit error rates $p_e$
when retry limit $n_\textrm{RL}$ goes to infinite
for different payload sizes $\ell_d$.}
\label{fig: l q l p}
\end{figure}

%%%%%%%%%%%%%%%%%%%%%%%%%

\begin{table}
\caption{Mean size of transferred packets $\ell^{(q)}$ for different bit error rates $p_e$
	when payload size $\ell_d$ is 2312~byte and
	retry limit $n_\textrm{RL}$ goes to infinite.}
\label{ta: ell q ell p}
\centering
\begin{tabular}{|l|| r | r | r | r |} 
\hline
\bfseries $p_e$ & \bfseries $10^{-6}$ & \bfseries $10^{-5}$ & \bfseries $10^{-4}$ & \bfseries $10^{-3}$
	\\ \hline\hline
Static Web objects			& $1761.4$ & $1815.0$ & $2161.4$ & $2344.6$ \\ \hline
Dynamic Web objects			& $1552.0$ & $1592.9$ & $1926.8$ & $2334.8$ \\ \hline
\end{tabular} \\
\small{
{\bfseries Note:}~Mean sizes of transferred packets $\ell^{(q)}$ are represented in units of bytes.
	Maximum size of generated packets $\ell^{(p)}_{\max}$ of static and dynamic Web objects
	is $2346.0$~bytes,
	which is $\ell_d + \ell_h^{({\rm R})}$.
}
\end{table}

Let $\ell^{(q)}$ be the mean transferred packet size,
that is $\ell^{(q)} \stackrel{\triangle}{=} \int_0^\infty \, x \, dF^{(q)}(x)$.
To investigate the effect of RPSP when retry limit $n_\textrm{RL}$ goes to infinite, 
Figs.~\ref{fig: l q l p} (a) and (b) show
mean transferred packet size $\ell^{(q)}$ and 
mean generated packet size $\ell^{(p)}$ of static and dynamic Web objects, respectively,
versus bit error rates $p_e$
for different payload sizes $\ell_d$.
Table~\ref{ta: ell q ell p} lists mean size of transferred packets $\ell^{(q)}$
for different bit error rates $p_e$
when payload size $\ell_d$ is 2312~byte and
retry limit $n_\textrm{RL}$ goes to infinite.
in the cases of static and dynamic Web objects.
From Figs.~\ref{fig: l q l p} (a) and (b), and Table~\ref{ta: ell q ell p},
we find that the RPSP effect appears
when the bit error rate $p_e$ exceeds $10^{-5}$.
The reason for this is that longer transferred packets are likely to be retransmitted more times.
Letting random variables $L^{(p)}_\kappa$ and $R_\kappa$ be 
size and the number of retransmissions of the transferred packet
of which {\tt seqNum} is $\kappa$, respectively,
this implies that
$h(x^{(p)}_\kappa, \infty) = E[R_\kappa + 1\,|\, L^{(p)}_\kappa = x^{(p)}_\kappa]
> h(x^{(p)}_{\kappa'}, \infty)$ if $x^{(p)}_\kappa > x^{(p)}_{\kappa'}$.

Let $\ell^{(p)}_{\max}$ be the maximum generated packet size,
i.e., $\ell^{(p)}_{\max} = \min\{l; F^{(p)}(l)=1\}$.\footnote{
	Letting $\ell^{(m)}_{\max}$ be the maximum message size,
	$\ell^{(p)}_{\max}$ is given by $\min\{\ell_d, \ell^{(m)}_{\max}\}$.
}
From an inspection of Figs.~\ref{fig: l q l p} (a) and (b), and Table~\ref{ta: ell q ell p},
we find that $\ell^{(q)}$ reaches around $\ell^{(p)}_{\max}$
as $p_e \to 1$.
This implies that
the number of transmissions of the longest transferred packets
is dominant in the total number of transmissions of all transferred packets
due to RPSP.
Then, we have the following conjecture.

\medskip

\begin{conjecture}
\label{conjecture: ell q max}
\textit{Asymptotic bound on mean transferred packet size.}
We denote 
the asymptotic bound on the mean transferred packet size by $\ell^{(q)}_{\max}$.
That is
the finite limit of the mean transferred packet size
as the value of $p_e$ approaches one.
Then, we have 
\begin{align}
\label{eq: ell q max}
\ell^{(q)} \to \ell^{(q)}_{\max} &= 
	\ell^{(p)}_{\max},
	\qquad\text{as $p_e \to 1$.}
\end{align}
\hspace*{\fill}~\QED
\end{conjecture}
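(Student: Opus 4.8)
The plan is to recognize $\ell^{(q)}$ as the mean of an exponentially tilted version of the generated-packet law $F^{(p)}$, and then to invoke a Laplace-type concentration argument, working in the regime $n_\textrm{RL} \to \infty$ indicated by Figs.~\ref{fig: l q l p} and Table~\ref{ta: ell q ell p}. Throughout I write $M \stackrel{\triangle}{=} \ell^{(p)}_{\max}$ for the right endpoint of the support of $F^{(p)}$; when body packets occur ($\pi^{(\textrm{E})} < 1$) one has $M = \ell_d + \ell_h^{({\rm R})}$ and $F^{(p)}$ carries the atom $1 - \pi^{(\textrm{E})} > 0$ at $M$, as in every example of Section~\ref{sec: results}.

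First I would pass to the infinite-retry limit. Since $h(x, n_\textrm{RL})$ increases in $n_\textrm{RL}$ to $h(x, \infty) = 1/(1 - g(x)) = (1 - p_e)^{-(x + \ell_h^{({\rm L})})}$, and this limit is bounded on the compact support $[\ell_h^{({\rm R})}, \ell_d + \ell_h^{({\rm R})}]$ for each fixed $p_e < 1$, monotone convergence applied to the numerator and denominator of
\[
\ell^{(q)} = \frac{\int x\, h(x, n_\textrm{RL})\, dF^{(p)}(x)}{\int h(x, n_\textrm{RL})\, dF^{(p)}(x)}
\]
gives, after cancelling the common factor $(1 - p_e)^{-\ell_h^{({\rm L})}}$,
\[
\ell^{(q)} = \frac{\int x\, (1 - p_e)^{-x}\, dF^{(p)}(x)}{\int (1 - p_e)^{-x}\, dF^{(p)}(x)}.
\]
Next I would set $t \stackrel{\triangle}{=} -\log(1 - p_e)$, so that $(1 - p_e)^{-x} = e^{tx}$ and $t \to \infty$ as $p_e \to 1$, exhibiting $\ell^{(q)}$ as the mean of the tilted measure $e^{tx}\, dF^{(p)}(x)$.

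Since $x \le M$ on the support we have $\ell^{(q)} \le M$ trivially, so it remains to bound $M - \ell^{(q)}$ from above. Writing
\[
M - \ell^{(q)} = \frac{\int (M - x)\, e^{tx}\, dF^{(p)}(x)}{\int e^{tx}\, dF^{(p)}(x)},
\]
I would split the numerator at $M - \delta$ for a fixed small $\delta > 0$. On $[M - \delta, M]$ the factor $M - x \le \delta$, so that piece is at most $\delta$ times the denominator. On $[\ell_h^{({\rm R})}, M - \delta]$ one has $M - x \le M - \ell_h^{({\rm R})}$ and $e^{tx} \le e^{t(M - \delta)}$, bounding that piece by $(M - \ell_h^{({\rm R})})\, e^{t(M - \delta)}$. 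For the denominator, the definition of $M$ as the supremum of the support furnishes, for any $0 < \delta' < \delta$, a positive mass $c(\delta') \stackrel{\triangle}{=} \Pr.(L^{(p)} \ge M - \delta') > 0$, whence the denominator is at least $c(\delta')\, e^{t(M - \delta')}$. Combining these estimates,
\[
M - \ell^{(q)} \le \delta + \frac{M - \ell_h^{({\rm R})}}{c(\delta')}\, e^{-t(\delta - \delta')} \xrightarrow{\ t \to \infty\ } \delta,
\]
and letting $\delta \downarrow 0$ yields $\ell^{(q)} \to M = \ell^{(p)}_{\max}$, which is the claim.

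The main obstacle is the order of the two limits: the conjecture holds only if $n_\textrm{RL} \to \infty$ is taken before $p_e \to 1$. Indeed, for finite $n_\textrm{RL}$ one has $h(x, n_\textrm{RL}) \to n_\textrm{RL} + 1$ as $p_e \to 1$, uniformly in $x$, so the tilt degenerates and $\ell^{(q)} \to \ell^{(p)}$ rather than $\ell^{(p)}_{\max}$; the concentration onto the longest packets survives only because the geometric sum $h(x, \infty)$ becomes unbounded as $p_e \to 1$. The remaining technical point, the bound $c(\delta') > 0$, is immediate for the discrete and long-tailed message models of Section~\ref{sec: results}, where $F^{(p)}$ has the explicit atom $1 - \pi^{(\textrm{E})}$ at $\ell_d + \ell_h^{({\rm R})}$, and in general it follows directly from the definition of the right endpoint of the support.
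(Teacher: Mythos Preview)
Your argument is correct and follows the same overall skeleton as the paper's proof in Appendix~\ref{ap: w q infty}: both first send $n_\textrm{RL} \to \infty$ so that the weight $h(x,\infty)$ becomes $(1-p_e)^{-(x+\ell_h^{({\rm L})})}$, and then normalize by the factor corresponding to the maximum packet size to exhibit concentration at $\ell^{(p)}_{\max}$ as $p_e \to 1$. The difference is one of generality and style. The paper restricts to the discrete law \eqref{eq: F p d another 1}, computes the weights $w^{(q)}_i$ explicitly, and shows $\bar{w}^{(q)}_i \to \textbf{1}\{i=i_{\max}\}$ by factoring out $(1-p_e)^{\ell^{(p)}_{\max}+\ell_h^{({\rm L})}}$; this is transparent but covers only finitely supported $F^{(p)}$. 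Your tilted-measure/Laplace formulation, with the $\delta$-$\delta'$ splitting of the integral and the lower bound $c(\delta')>0$ on the mass near the right endpoint, handles an arbitrary $F^{(p)}$ supported in $[\ell_h^{({\rm R})},\ell_d+\ell_h^{({\rm R})}]$ in one stroke, and in particular applies directly to the continuous message-size models (lognormal, Weibull) actually used in Section~\ref{sec: results}. Your explicit remark on the order of limits---that for fixed finite $n_\textrm{RL}$ one gets $\ell^{(q)}\to\ell^{(p)}$ rather than $\ell^{(p)}_{\max}$---is a point the paper leaves implicit in writing $\lim_{p_e\to 1}\lim_{n_\textrm{RL}\to\infty}$ and is a genuine clarification.
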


\bigskip

Appendix~\ref{ap: w q infty} provides the proof of conjecture~\ref{conjecture: ell q max}
in the case of a discrete generated packet size distribution.

From conjecture~\ref{conjecture: ell q max},
we find that RPSP effect appears stronger 
when $\ell^{(p)}_{\max} / \ell^{(p)}$ increases. 
If the mean message size $\ell^{(m)}$ is enough large
compared with payload size $\ell_d$,
resulting in $\ell^{(p)} \approx \ell_d = \ell^{(p)}_{\max}$,
RPSP effect is likely to disappear.

%%%%%%%%%%% グッドプット %%%%%%%%%%%%%%%%%

\subsection{Effect of RPSP on goodput}

\begin{figure}
\begin{minipage}[t]{8cm}
\centering
	\epsfig{file=./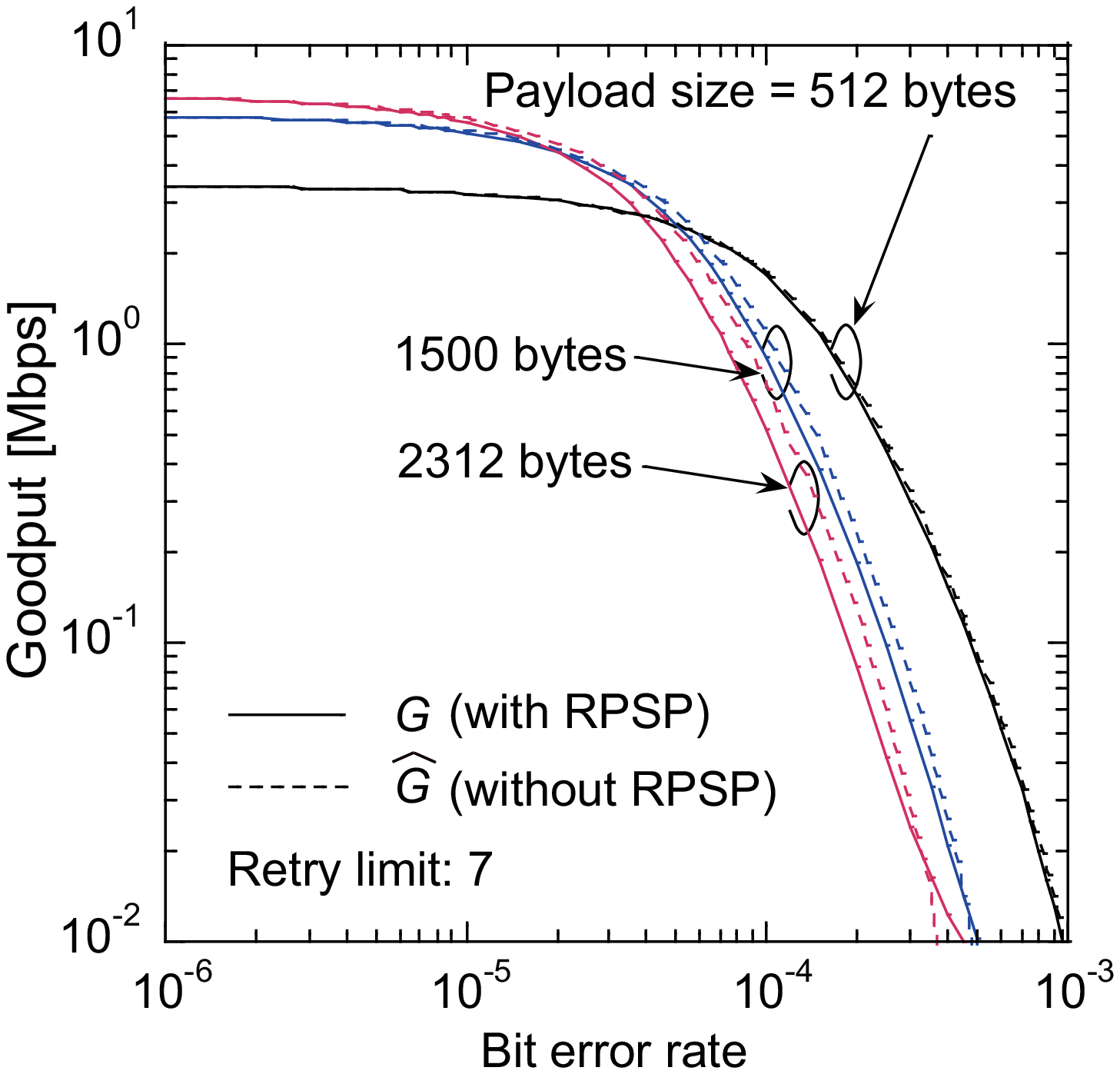, width=6cm} \\
	\footnotesize{a) Case of static Web objects.} 
\end{minipage}
\hfill
\begin{minipage}[t]{8cm}
\centering
	\epsfig{file=./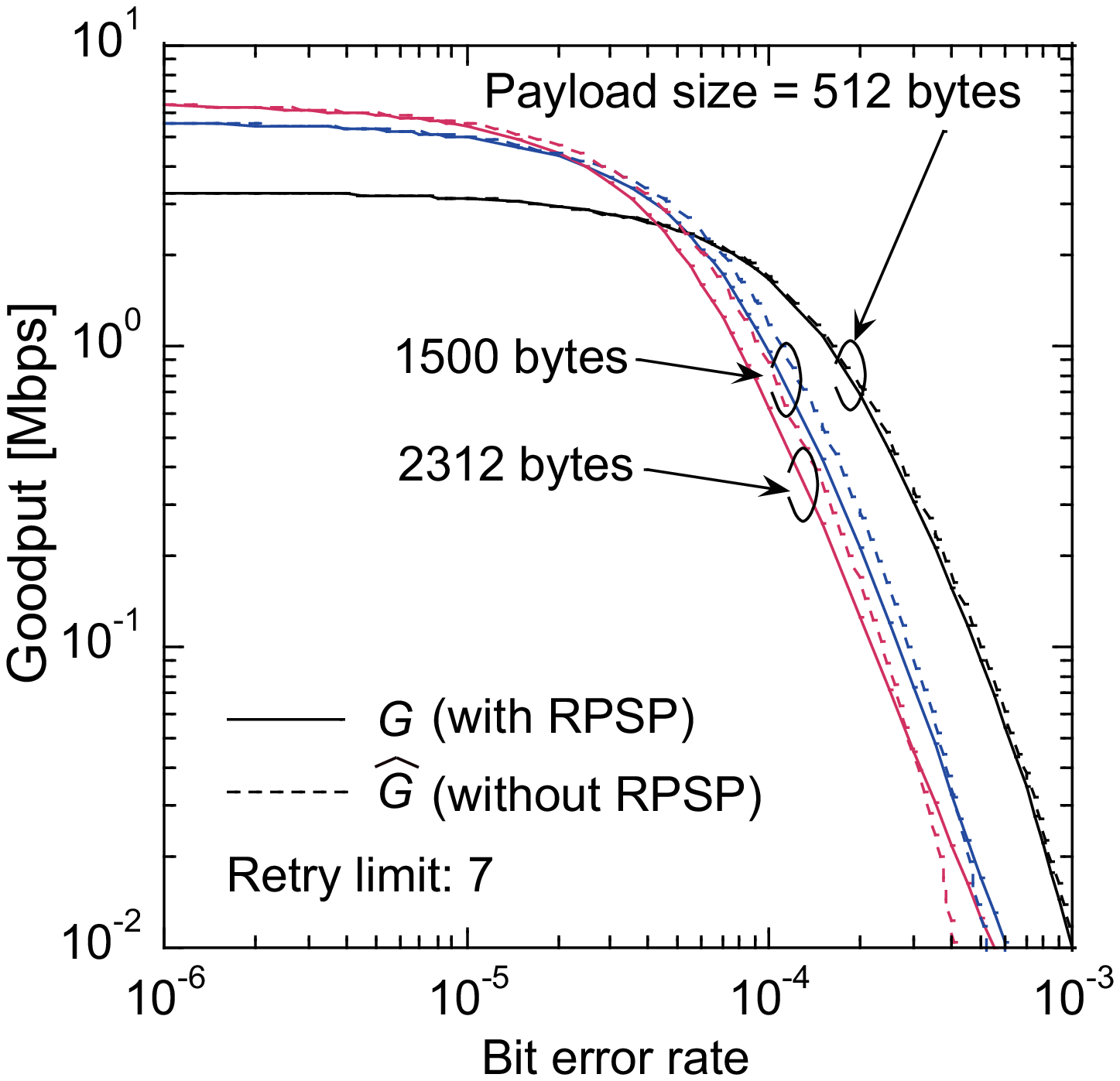, width=6cm} \\
	\footnotesize{b) Case of dynamic Web objects.}
\end{minipage}\\
\caption{Goodput $G$ and $\hat{G}$ versus bit error rate $p_e$ for different payload sizes $\ell_d$
	when limit retry $n_\textrm{RL}$ is $7$.}
\label{fig: goodput}
\end{figure}

In this subsection,
we investigate the RPSP effect on goodput.
To do this,
we introduce $\hat{G}$ which is obtained from the approximation
of $F^{(q)}(x) = \textbf{1}(x - \ell^{(p)})$.
Thus,
\begin{align}
\label{eq: G hat}
	\hat{G} &= \cfrac{
		\Pr.(\text{\it delivery} \, | \, L^{(p)} = \ell^{(p)}) \, 
		(\ell^{(p)} - \ell_h^{({\rm R})})
		}
		{
		E\left[R + 1 \, | \,  L^{(p)} = \ell^{(p)} \right]
			\,E\left[T^{(cycle)} \, | \,  L^{(p)} =\ell^{(p)} \right]
		}.
\end{align}
Cleary, the value of $\hat{G}$ is equal to that of $\hat{G}$
when no transferred packet loss happens
because RPSP effect disappears (see Example~\ref{ex: no frame loss}).

Figures~\ref{fig: goodput} (a) and (b) show
$G$ and $\hat{G}$ versus bit error rate $p_e$ for different payload sizes $\ell_d$
when limit retry $n_\textrm{RL}$ is $7$
in the cases of static and dynamic Web objects, respectively.
From these figures,
we find that RPSP leads to overestimate goodput obtained from the traditional model
which assume that the transferred packets is constant in size.
As similar to the results mentioned in the preceding subsection,
we find that the RPSP effect on goodput appears
when the bit error rate $p_e$ exceeds $10^{-5}$ and payload size $\ell_d$ exceeds $1500$~bytes.

	\begin{figure}
	\centering
	\epsfig{file=./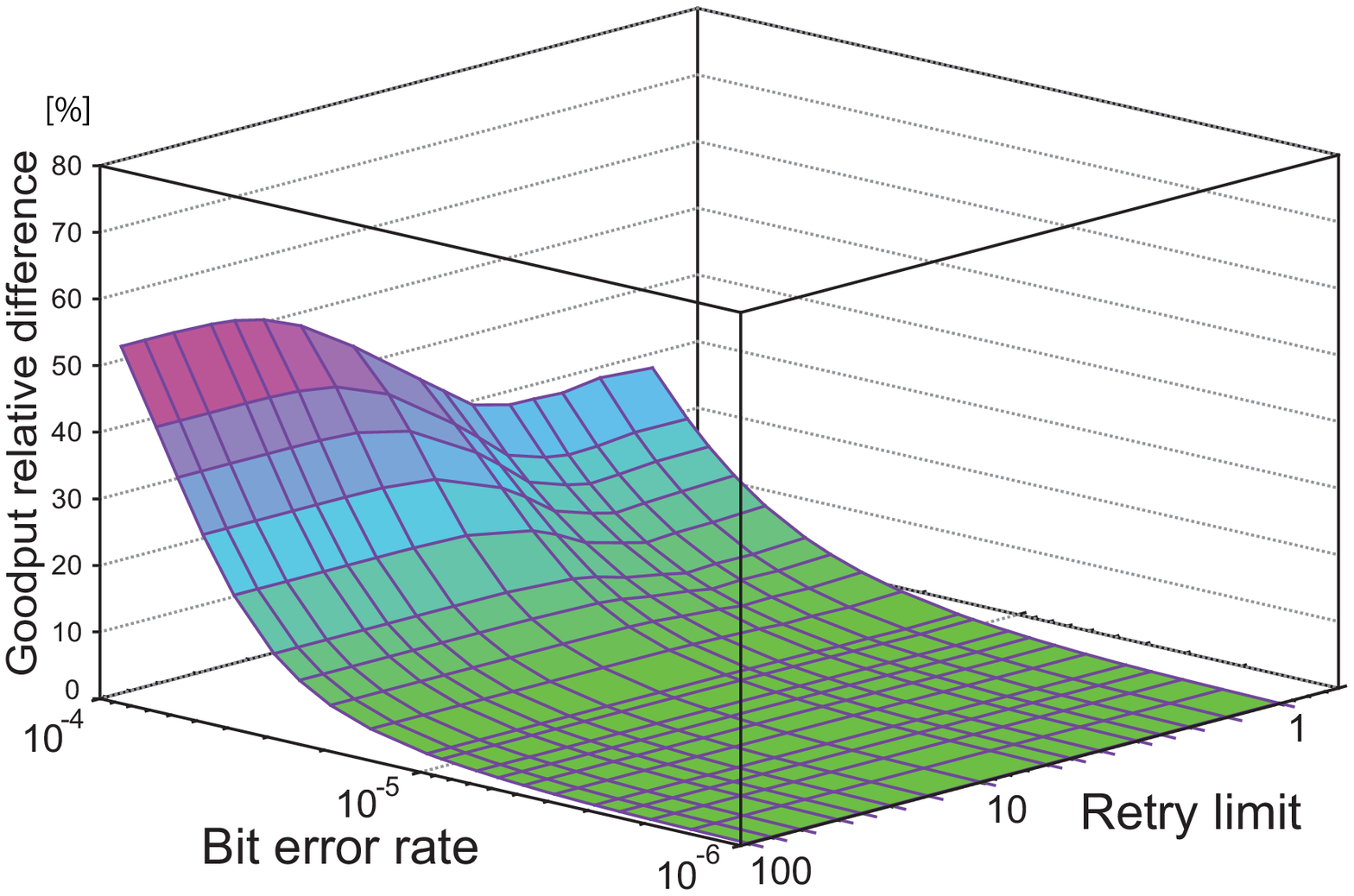, width=9cm} \\
	\caption{Goodput relative difference $(\hat{G} - G) / G$
	versus bit error rate $p_e$ and retry limit $n_\textrm{RL}$
	when payload size $\ell_d$ is 2312~bytes
	in the case of static Web objects.}
	\label{fig: 3dg}
	\end{figure}

Figure~\ref{fig: 3dg} shows goodput relative difference $(\hat{G} - G) / G$
versus bit error rate $p_e$ and retry limit $n_\textrm{RL}$
when payload size $\ell_d$ is 2312~bytes
in the case of static Web objects.
From this figure,
we find that the effect of RPSP on goodput appears stronger when bit error rate $p_e$ increases
for large enough retry limits.

\section{Conclusion}
\label{sec: conclusion}

In this paper,
we have described a data-unit-size distribution model to represent
the retransmitted packet size (RPSP) property and message segmentation behavior
when frames are independently lost and
they are recovered by a stop-and-wait protocol.
RPSP means that all transferred packets at retransmissions
with the same sequence number
have the same size at the original transmission,
which is identical to the packet generated from a message,
namely, generated packets.
Moreover, we have derived the goodput formula 
using an approach to derive the data-unit-size distribution.
We have shown that the RPSP effect appears stronger 
when the maximum generated packet size is larger than the mean generated packet size.
From numerical results,
we have demonstrated that 
the RPSP effect on frame size distributions and goodput appears
when the bit error rate exceeds $10^{-5}$ and payload size exceeds $1500$~bytes
in a scenario where static Web objects are delivered over an IEEE 802.11 DCF wireless network.

The remaining issues include modeling a scenario where
the collisions happen over a wireless network with bit errors occurring in burst.

\section*{Acknowledgment}

This work was supported by JSPS KAKENHI Grant Number JP15K00139. 

%%%%%%% 付録 %%%%%%%%%%%%%%%%%%%%%

\appendices

\section{Proof of Proposition~\ref{pro: F q}}
\label{ap: F q}

First, without loss of generality,
we consider the case of discrete message size distributions,
resulting in the form of discrete generated packet size distributions
given by \eqref{eq: F p d another 1}.
Substituting \eqref{eq: F p d another 1} into \eqref{eq: F q},
we have 
\begin{align}
\label{eq: ap F q d}
	F^{(q)}(x) = \sum_{i=0}^{n_d} w^{(q)}_i \textbf{1}(x - \ell^{(p)}_i),
\end{align}
where $w^{(q)}_i$ for $i=0, 1, \cdots, n_d$ is given by 
\begin{align}
\label{eq: ap w q d}
	w^{(q)}_i &= 
		\cfrac{
		w^{(p)}_i \, E\left[R + 1 \, | \,  L^{(p)} = \ell^{(p)}_i \right]
		}
		{
		E\left[R + 1\right]
		} \notag \\
		&= 
		\cfrac{
		w^{(p)}_i \, E\left[R + 1 \, | \,  L^{(p)} = \ell^{(p)}_i \right]
		}
		{
		\displaystyle\sum_{j=0}^{n_d} \, w^{(p)}_j \,
			E\left[R + 1 \, | \,  L^{(p)} = \ell^{(p)}_j \right]
		}.	
\end{align}
To derive \eqref{eq: ap F q d} and \eqref{eq: ap w q d},
we introduce the following notations
of the generated packet of size equal to 
$\ell^{(p)}_i$ for $i=0, 1, \cdots, n_d$:

\begin{description}

    \item[$M_i (t)$: ]\rule{0in}{0in}
		number of attempts of transmissions
		of transferred packets
		prior to time $t$,

    \item[$Q_{i, \kappa}(t)$: ]
		\rule{0in}{0in}
		number of attempts of transmissions
			of transferred packets
			that are created from the generated packet
			with {\tt seqNum} of $\kappa$
			prior to time $t$.

\end{description}

A sender transmits the transferred packet
of which {\tt seqNum} is $\kappa$
and size is $\ell^{(p)}_i$
$Q_{i, \kappa}(t)$ times prior to time $t$.
From the argument of a probability mass function,
the form of $w^{(q)}_i$ can be written as
\begin{align}
\label{eq: ap w q}
	&w^{(q)}_i = 
	\lim_{t \to \infty}
	\cfrac{
		\displaystyle\sum_{\kappa=1}^{M_i (t)} \, Q_{i, \kappa}(t)
	}
	{
		\displaystyle\sum_{i=0}^{n_d} \sum_{\kappa=1}^{M_i(t)} \, Q_{i, \kappa}(t)
	}
	=
	\left(
	\lim_{t \to \infty}
	\cfrac{
		\displaystyle\sum_{j=0}^{n_d} M_j(t)
	}
	{
		\displaystyle\sum_{i=0}^{n_d} \sum_{\kappa=1}^{M_i(t)} \, Q_{i, \kappa}(t)
	}
	\right)
	\left(
	\lim_{t \to \infty}
	\cfrac{M_i (t)}
	{
	\displaystyle\sum_{j=0}^{n_d} M_j(t)
	}
	\right)
	\left(
	\lim_{t \to \infty}
	\cfrac{
		\displaystyle\sum_{\kappa=1}^{M_i (t)} \, Q_{i, \kappa}(t)
	}		
	{M_i (t)}
	\right).
\end{align}
The form of $w^{(p)}_i$ in \eqref{eq: ap w q d} is given by
\begin{align}
\label{eq: ap w p}
	w^{(p)}_i &=
	\Pr.(\text{the generated packet of size is equal to 
		$\ell^{(p)}_i$}) 
	=\lim_{t \to \infty}
	\cfrac{M_i (t)}
	{
	\displaystyle\sum_{j=0}^{n_d} M_j(t)
	}.
\end{align}

Let $R_\kappa$ be the number of retransmissions
of the generated packet of which {\tt seqNum} is $\kappa$.
Under assumption \texttt{A2},
$\{R_\kappa\}$ forms a sequence of mutually independent and identically
distributed random variables with finite value of
$E[R_\kappa](\stackrel{\triangle}{=}E[R])$.
From the Law of Large Numbers,
we have
\begin{align}
\label{eq: ap R+1}
	E[R + 1] &= 	
	\lim_{t \to \infty}
	\cfrac{
		\displaystyle\sum_{i=0}^{n_d} \sum_{\kappa=1}^{M_i(t)} \, Q_{i, \kappa}(t)
	}
	{
		\displaystyle\sum_{j=0}^{n_d} M_j(t)		
	}, \\
\intertext{and}
\label{eq: ap R+1 L}
	E[R + 1 \, | \, L^{(p)} = \ell^{(p)}_i] &= 
	\lim_{t \to \infty}
	\cfrac{
		\displaystyle\sum_{\kappa=1}^{M_i (t)} \, Q_{i, \kappa}(t)
	}		
	{M_i (t)}.		
\end{align}
Substituting \eqref{eq: ap w p}, \eqref{eq: ap R+1} and \eqref{eq: ap R+1 L}
into \eqref{eq: ap w q},
we obtain \eqref{eq: ap F q d} and \eqref{eq: ap w q d}.

Next, we provide an alternative derivation of \eqref{eq: F q}.
Consider a packet size sequence
$\{L^{(q)}_n; n \in \mathcal{N}(\stackrel{\triangle}{=}\{1, 2, \cdots\})\}$
where $L^{(q)}_n$ means the transferred packet size
of the $n$th transmission.
Forming 
transferred packets with the same {\tt seqNum}
a group,
we constitute a sequence $\{L^{(q)}_n\}$ expressed as
\begin{multline}
\{L^{(q)}_n; n \in \mathcal{N}\} = \\
	\Big\{
	\overbrace{
		L^{(p)}_1,\cdots, L^{(p)}_1, L^{(p)}_1 
		}^{R_1 + 1}, \,
	\overbrace{
		L^{(p)}_2,\cdots, L^{(p)}_2, L^{(p)}_2
		}^{R_2 + 1}, 
	\cdots, 
	\overbrace{
		L^{(p)}_\kappa,\cdots, L^{(p)}_\kappa, L^{(p)}_\kappa
		}^{R_\kappa + 1}, 
	\cdots
	\Big\}.
\label{eq: L q}
\end{multline}
As shown in \eqref{eq: L q},
the random variable $L^{(p)}_\kappa$ appears $R_\kappa + 1$
times consecutively in the sequence of the transferred packets
with {\tt seqNum} of $\kappa$.
Therefore, we obtain \eqref{eq: F q}.

\section{Proof of Proposition~\ref{pro: G}}
\label{ap: goodput}

	\begin{figure}
	\centering
	\epsfig{file=./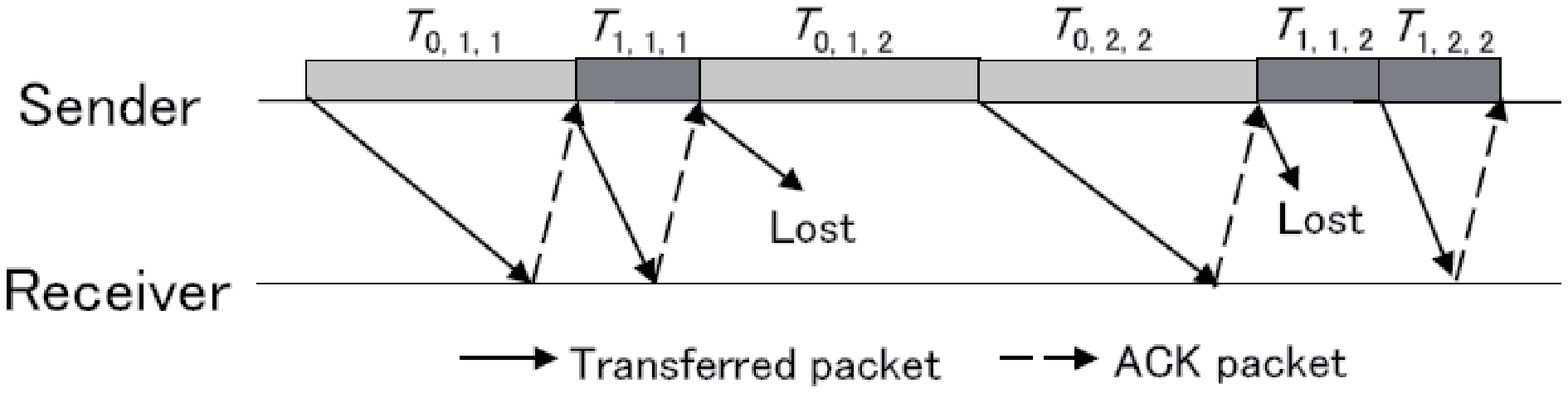, width=10cm} \\
	\caption{The example of $\{T_{i, j, \kappa}\}$ under a heavy traffic condition
	in the case of $n_d$ equal to one.}
	\label{fig: t}
	\end{figure}

Similar to the proof mentioned in Appendix~\ref{ap: F q},
we consider the case of discrete message size distributions
given by \eqref{eq: F p d another 1}.
Substituting \eqref{eq: F p d another 1} into \eqref{eq: G},
we have 
\begin{align}
\label{eq: G d}
	G &= \cfrac{
		\displaystyle\sum_{i=0}^{n_d} \, w^{(p)}_i \, 
			\Pr.(\text{\it delivery} \, | \, L^{(p)} = \ell^{(p)}_i) 
				\, \left(\ell^{(p)}_i - \ell_h^{({\rm R})}\right)}
		{
		\displaystyle\sum_{i=0}^{n_d} \, w^{(p)}_i \, 
			E\left[R + 1 \, | \,  L^{(p)} = \ell^{(p)}_i \right]
			\,E\left[T^{(cycle)} \, | \,  L^{(p)} = \ell^{(p)}_i \right]
		}.
\end{align}

To derive \eqref{eq: G d},
we introduce the following additional notations
for the generated packet of size equal to 
$\ell^{(p)}_i$ for $i=0, 1, \cdots, n_d$:

\begin{description}

    \item[$N_i (t)$: ]\rule{0in}{0in}
		number of successful transmissions
		of transferred packets
		prior to time $t$,

    \item[$T_{i, j, \kappa}$: ]\rule{0in}{0in}	
	transmission of the $j(\le Q_{i, \kappa}(t))$th attempt
	for the transferred packet of which {\tt seqNum} is $\kappa(\le M_i (t))$.
	The example of $\{T_{i, j, \kappa}\}$ under a heavy traffic condition
	in the case of $n_d$ equal to one is shown in Fig.~\ref{fig: t}.

\end{description}

For large enough $t$, we have
\begin{align}
\label{eq: t RT}
    t &\approx \sum_{i=0}^{n_d} \,\sum_{\kappa=1}^{M_i (t)} 
		\,\sum_{j=1}^{Q_{i, \kappa}(t)}  \, T_{i, j, \kappa}.
\end{align}
The definition of goodput yields
\begin{align}
\label{eq: G another}
    G &= \lim_{t \to \infty}
	\cfrac{
		\displaystyle  \sum_{i=0}^{n_d} N_i(t) \, \left(\ell^{(p)}_i - \ell_h^{({\rm R})}\right)
		}
		{
		t
		}.
\end{align}
Substituting \eqref{eq: t RT} into  \eqref{eq: G another},
we have
\begin{align}
\label{eq: G 1}
    G &= \lim_{t \to \infty}
	\cfrac{
		\displaystyle  \sum_{i=0}^{n_d} N_i(t) \, \left(\ell^{(p)}_i - \ell_h^{({\rm R})}\right)
		}
		{
		\displaystyle
		\sum_{i=0}^{n_d} \, \sum_{j=1}^{Q_{i, \kappa}(t)} \, 
		\sum_{\kappa=1}^{M_\kappa (t)} \, T_{i, j, \kappa}	
		}
	= 
		\left(
		\lim_{t \to \infty} 
	    	\cfrac{
		\displaystyle\sum_{i=0}^{n_d} N_i(t) \, \left(\ell^{(p)}_i - \ell_h^{({\rm R})}\right)
		}
		{
		\displaystyle\sum_{i=0}^{n_d} \, \sum_{\kappa=1}^{M_i(t)} \, Q_{i, \kappa}(t) 
		}
		\right)
		\left(		
	   	\lim_{t \to \infty}
		\cfrac{
		\displaystyle\sum_{i=0}^{n_d} \sum_{\kappa=1}^{M_\kappa(t)} \, 
		Q_{i, \kappa}(t)
		}
		{
		\displaystyle
		\sum_{i=0}^{n_d} \, \sum_{j=1}^{Q_{i, \kappa}(t)} \, 
		\sum_{\kappa=1}^{M_i (t)}
		 \, T_{i, j, \kappa}
		}
		\right).
\end{align}

The form of $\Pr.(\text{\it delivery} \, | \, L^{(p)} = \ell^{(p)}_i)$
is given by 
\begin{align}
\label{eq: ap delivery}
	\Pr.(\text{\it delivery} \, | \, L^{(p)} = \ell^{(p)}_i) 
	&= \lim_{t \to \infty} \cfrac{N_i(t)}{M_i(t)}.
\end{align}
From \eqref{eq: ap w p}, \eqref{eq: ap R+1} and \eqref{eq: ap delivery},
we have
\begin{align}
	\lim_{t \to \infty} 
	    \cfrac{N_i(t)}
		{\displaystyle\sum_{i=0}^{n_d} \, \sum_{\kappa=1}^{M_i(t)}\,
		Q_{i, \kappa}(t) 
		} 
	&=
	\left(
	\lim_{t \to \infty}
	\cfrac{
		\displaystyle\sum_{j=0}^{n_d} \, M_j(t)
		}
	{\displaystyle\sum_{i=0}^{n_d} \, \sum_{\kappa=1}^{M_i(t)} \,  Q_{i, \kappa}(t)
	} 
	\right)
	\left(
	\lim_{t \to \infty}\cfrac{M_i(t)}
		{\displaystyle\sum_{j=0}^{n_d} \, M_j(t)}
	\right)	
	\left(
	\lim_{t \to \infty}\cfrac{N_i(t)}{M_i(t)}
	\right)	 \notag\\
	&= 
	\cfrac{1}{E[R+1]} \cdot 
		w^{(p)}_i \cdot \Pr.(\text{\it delivery} \, | \, L^{(p)} = \ell^{(p)}_i).
\end{align}

The first term of \eqref{eq: G 1} can be rewritten as
\begin{align}
\label{eq: G first term}
	\lim_{t \to \infty} 
	    \cfrac{
		\displaystyle\sum_{i=0}^{n_d} N_i(t) \, \left(\ell^{(p)}_i - \ell_h^{({\rm R})}\right)
		}
		{\displaystyle\sum_{i=0}^{n_d} \sum_{\kappa=1}^{M_i(t)} \, Q_{i, \kappa}(t)}
	&=
	\cfrac{
	\displaystyle\sum_{i=0}^{n_d}
	w^{(p)}_i \, \Pr.(\text{\it delivery} \, | \, L^{(p)} = \ell^{(p)}_i) \, 
	\left(\ell^{(p)}_i - \ell_h^{({\rm R})}\right)
	}
	{E[R+1]}.
\end{align}

Under the assumption of \texttt{A2},
$\{T_{i, 1, 1}, T_{i, 1, 2}, \cdots, T_{i, 2, 1}, T_{i, 2, 2}, \cdots\}$
forms a sequence of mutually independent and identically distibuted random variables
with a common distribution with mean $E[T^{(cycle)} \, | \, L^{(p)} = \ell^{(p)}_i]$.
From the Law of the Large Numbers, we have
\begin{align}
\label{eq: T cycle}
	\lim_{t \to \infty}
		\cfrac{
		\displaystyle\sum_{\kappa=1}^{M_i (t)} \,
		\sum_{j=1}^{Q_{i, \kappa}(t)} \, T_{i, j, \kappa}
		}
		{
		\displaystyle\sum_{\kappa=1}^{M_i(t)} \, Q_{i, \kappa}(t)
		} 
	&= E\left[T^{(cycle)} \, | \, L^{(p)} = \ell^{(p)}_i\right].
\end{align}

The inverse of the last term of \eqref{eq: G 1} can be rewritten as
\begin{multline}
\label{eq: G last term}
	   \lim_{t \to \infty}
		\cfrac{
		\displaystyle\sum_{i=0}^{n_d} \, \sum_{\kappa=1}^{M_i (t)} \, 
		\sum_{j=1}^{Q_{i, \kappa}(t)} \, T_{i, j, \kappa}
		}
		{
		\displaystyle\sum_{i=0}^{n_d} \sum_{\kappa=1}^{M_i(t)} \, Q_{i, \kappa}(t)
		} \\
	=  \displaystyle\sum_{i=0}^{n_d}
		\left(
		\lim_{t \to \infty}
		\cfrac{
		\displaystyle\sum_{j=0}^{n_d} M_j(t)		
		}
		{
		\displaystyle\sum_{i=0}^{n_d} \sum_{\kappa=1}^{M_i(t)} \, Q_{i, \kappa}(t)
		}
		\right)
		\left(
		\lim_{t \to \infty}\cfrac{M_i(t)}
		{\displaystyle\sum_{j=0}^{n_d} \, M_j(t)}
		\right)	
		\left(		
		\lim_{t \to \infty}
		\cfrac{
		\displaystyle\sum_{\kappa=1}^{M_i (t)} \, Q_{i, \kappa}(t)
		}		
		{M_i (t)}
		\right)	
		\left(	
		\lim_{t \to \infty}
		\cfrac{
			\displaystyle\sum_{\kappa=1}^{M_i (t)} \, 
			\sum_{j=1}^{Q_{i, \kappa}(t)} \, T_{i, j, \kappa}
			}
			{
			\displaystyle\sum_{\kappa=1}^{M_i (t)} \, Q_{i, \kappa}(t)
			}
		\right)
\end{multline}
Substituting \eqref{eq: ap w p}, \eqref{eq: ap R+1}, \eqref{eq: ap R+1 L} and \eqref{eq: T cycle}
into the above equation,
we have
\begin{align}
\label{eq: G last term final}
	   \lim_{t \to \infty}
		\cfrac{
		\displaystyle
		\sum_{i=0}^{n_d} \, \sum_{\kappa=1}^{M_i (t)} \, 
		\sum_{j=1}^{Q_{i, \kappa}(t)} \, T_{i, j, \kappa}
		}
		{
		\displaystyle\sum_{i=0}^{n_d} \sum_{\kappa=1}^{M_i(t)} \, Q_{i, \kappa}(t)
		}
	&=
	\cfrac{
	\displaystyle\sum_{i=0}^{n_d} w^{(p)}_i \, E\left[R+1 \, | \,  L^{(p)} = \ell^{(p)}_i \right] \,
	E\left[T^{(cycle)} \, | \,  L^{(p)} = \ell^{(p)}_i \right]
	}
	{
	E[R+1]
	}.  
\end{align}
Substitution of \eqref{eq: G first term} and \eqref{eq: G last term final} into 
\eqref{eq: G 1} yields \eqref{eq: G d}.

\section{Derivation of \eqref{eq: E T cycle L}}
\label{ap: E T cycle}

Because no collision occurs,
from the argument of \cite{CHA03},
we have
\begin{align}
\label{eq: E T cycle L 1}
    E\left[T^{(cycle)} \, | \, L^{(p)} = x \right] &= 
	\cfrac{\left(1-\tau(x)\right) \, \sigma}{\tau(x)}
	+ \left(1 - g(x)\right) \, t_{\textrm{suc}}(x) 
	+ g(x) \, t_{\textrm{bit}}(x),	
\end{align}
where
$\tau(x)$ is the probability that a sender can transmit a transferred packet
of size equal to $x$.
From the argument of \cite{BIA05},
$\tau(x)$ is given by
\begin{align}
\label{eq: tau ell} 
    \tau(x) 
	&=\cfrac{1}
	{
	1+\cfrac{1-g(x)}
		{1-\left\{g(x)\right\}^{n_\text{RL}+1}
	}
	\displaystyle\sum_{r=0}^{n_\text{RL}}
	b_r \, \left\{g(x)\right\}^r 
	}.
\end{align}
Substitution \eqref{eq: tau ell} into \eqref{eq: E T cycle L 1},
we obtain \eqref{eq: E T cycle L}.

\section{Proof of Conjecture~\ref{conjecture: ell q max}}
\label{ap: w q infty}

Suppose that the generated packets sizes follow
the discrete distribution
given by \eqref{eq: F p d another 1}.
By substitution of \eqref{eq: F p d another 1} into \eqref{eq: F q}, 
the transferred packet size distribution $F^{(q)}(\cdot)$ is given by
\begin{align}
\label{eq: F q d}
	F^{(q)}(x) &\stackrel{\triangle}{=}  
	\sum_{i=0}^{n_d} w^{(q)}_i \textbf{1}(x - \ell^{(q)}_i),
\end{align}
where
\begin{align}
	w^{(q)}_i &=
	\cfrac{
		w^{(p)}_i \, h(\ell^{(p)}_i, n_\textrm{RL})
		}
		{
		\displaystyle\sum_{j=0}^{n_d} \, w^{(p)}_j \, h(\ell^{(p)}_j, n_\textrm{RL})
		},
	\quad i = 0, 1, \cdots, n_d,		
\end{align}
because $E[R + 1 \, | \,  L^{(p)} = \ell^{(p)}_i] = h(\ell^{(p)}_i, n_\textrm{RL})$
and
$E[R + 1] = \sum_{j=0}^{n_d} \, w^{(p)}_j \, h(\ell^{(p)}_j, n_\textrm{RL})$.

Let $i_{\max}$ be the index corresponding to 
the maximum generated packet size $\ell^{(p)}_{\max}$.
Thus,
\begin{align}
i_{\max} = \argmax_{i \in \{0,1,\cdots, n_d\}}\{\ell^{(p)}_i\}.
\end{align}

We let $\bar{w}^{(q)}_i$  
be a finite limit of the weight corresponding to
discrete transferred packet size $\ell^{(p)}_i$
as $p_e \to 1$ and $n_\textrm{RL} \to \infty$
for $i=0, 1, \cdots, n_d$.
From 
$\lim_{n_\textrm{RL} \to \infty} h(x, n_\textrm{RL}) = 1 / (1 - g(x)) 
	= 1 / (1 - p_e)^{x + \ell_h^{({\textrm{L}})}}$
	if $0 \le g(x) < 1$,
we have 
\begin{align}
\label{eq: n infty p_e 1 h}
\bar{w}^{(q)}_i &= \lim_{p_e \to 1} \, \lim_{n_\textrm{RL} \to \infty}
		\cfrac{
		w^{(p)}_i \, h(\ell^{(p)}_i, n_\textrm{RL})
		}
		{
		\displaystyle\sum_{j=0}^{n_d} \, w^{(p)}_j \, h(\ell^{(p)}_j, n_\textrm{RL})
		} 
	= \lim_{p_e \to 1} \, 
	\cfrac{w^{(p)}_i}{(1 - p_e)^{\ell^{(p)}_i + \ell_h^{({\textrm{L}})}}}
	\cfrac{1}
		{
		\displaystyle\sum_{j=0}^{n_d}
	    	\cfrac{w^{(p)}_j}{(1 - p_e)^{\ell^{(p)}_j + \ell_h^{({\textrm{L}})}}}
		} \notag \\
	&= \lim_{p_e \to 1} \,
	\cfrac{w^{(p)}_i}{(1 - p_e)^{\ell^{(p)}_i + \ell_h^{({\textrm{L}})}}}
	\cfrac{(1 - p_e)^{\ell^{(p)}_{\max} + \ell_h^{({\textrm{L}})}}}
		{
		\displaystyle\sum_{j=0}^{n_d}
	    	\cfrac{w^{(p)}_j}{(1 - p_e)^{\ell^{(p)}_{\max} - \ell^{(p)}_j}}
		} \notag \\
	&=\lim_{p_e \to 1} \,
	\cfrac{w^{(p)}_i}{(1 - p_e)^{\ell^{(p)}_i + \ell_h^{({\textrm{L}})}}}
	\cfrac{(1 - p_e)^{\ell^{(p)}_{\max} + \ell_h^{({\textrm{L}})}}}
	{
		\displaystyle \sum_{{j=0, j \ne i_{\max}}}^{n_d}w^{(p)}_j
		(1 - p_e)^{\ell^{(p)}_{\max} - \ell^{(p)}_j}
		+ w^{(p)}_{\max}
	}  \notag \\
	&=
	\begin{cases}
	1,	& \text{for $i = i_{\max}$} \\
	0,	& \text{for $i \ne i_{\max}$}	
	\end{cases}.
\end{align}
Thus,
we have 
\begin{align}
F^{(q)}(x) \to \textbf{1}(x - \ell^{(p)}_{\max}),
	\qquad\text{as $p_e \to 1$.}
\end{align}
Therefore, we obtain \eqref{eq: ell q max}.

\small

\bibliographystyle{IEEEtran}
\bibliography{paper}

\end{document}